\def\version{November 25, 2014}


\documentclass[12pt]{article}
\def\macrosPb{}
\def\macrosHarxiv{}
\usepackage[psamsfonts]{amsfonts}
\usepackage{amsmath,amssymb,amsthm}
\usepackage[dvips]{graphicx}
\usepackage{appendix}
\usepackage{bbm} 
\usepackage{amsbsy}
\usepackage{enumerate}
\usepackage{cite}



\ifdefined\macrosPa
  \usepackage[textwidth=465pt,textheight=650pt,centering]{geometry} 
\else\ifdefined\macrosPb
  \usepackage[textwidth=500pt,textheight=650pt,centering]{geometry} 
\fi\fi

\ifdefined\macrosS
  \makeatletter

  \makeatother

  \usepackage{mathptmx}
  \DeclareMathAlphabet{\mathcal}{OMS}{cmsy}{m}{n}
\fi


%



\def\UseSection{
        \numberwithin{equation}{section}
	\theoremstyle{plain}
        \newtheorem{theorem}    {Theorem}[section]
        \DefineTheorems 
}

\def\DefineTheorems{
	
	\newtheorem{lemma}      [theorem] {Lemma}
	
	\newtheorem{prop}       [theorem] {Proposition}
	
	\newtheorem{cor}        [theorem] {Corollary}

	\theoremstyle{definition}
	\newtheorem{defn}       [theorem] {Definition}

	\newtheorem{rk} 	[theorem] {Remark}
	\theoremstyle{definition}

}

\newcommand{\bt}   {\begin{theorem}}
\newcommand{\et}   {\end  {theorem}}
\newcommand{\bl}   {\begin{lemma}}
\newcommand{\el}   {\end  {lemma}}
\newcommand{\bp}   {\begin{prop}}
\newcommand{\ep}   {\end  {prop}}
\newcommand{\bc}   {\begin{cor}}
\newcommand{\ec}   {\end  {cor}}
\newcommand{\bd}   {\begin{defn}}
\newcommand{\ed}   {\end  {defn}}

\newcommand{\ba}   {\begin{array}}
\newcommand{\ea}   {\end  {array}}
\newcommand{\be}   {\begin{enumerate}}
\newcommand{\ee}   {\end  {enumerate}}
\newcommand{\bi}   {\begin{itemize}}
\newcommand{\ei}   {\end  {itemize}}

\def\eq#1\en{\begin{equation}#1\end{equation}}  
\def\eqsplit#1\ensplit{
	\begin{equation}\begin{split}#1\end{split}\end{equation}
	}
\def\eqalign#1\enalign{
	\begin{align}#1\end{align}
	}
\def\eqmul#1\enmul{
	\begin{multline}#1\end{multline}
	}
\newcommand{\eqarrstar} {\begin{eqnarray*}} 
\newcommand{\enarrstar} {\end{eqnarray*}} 
\newcommand{\eqarray}   {\begin{eqnarray}} 
\newcommand{\enarray}   {\end{eqnarray}} 
\newcommand{\nnb}	{\nonumber \\} 

\newcommand{\lbeq}[1]  {\label{e:#1}}
\newcommand{\refeq}[1] {\eqref{e:#1}}    

%
%
\makeatletter
\newcommand{\labelcounter}[2]{{%
	\stepcounter{#1}
	\protected@write\@auxout{}%
	{\string\newlabel{#2}{{\csname the#1\endcsname}{\thepage}}}%
	{\ref{#2}}
	}}
\makeatother
%
%
%


\newcommand{\Nbold} {{\mathbb N}}

\newcommand{\Rbold} {{\mathbb R}}

\newcommand{\Zbold} {{\mathbb Z}}


 
\newcommand{\Bcal}   {\mathcal{B}} 
 
\newcommand{\Dcal}   {\mathcal{D}}

\newcommand{\Ical}   {\mathcal{I}}

\newcommand{\Ncal}   {\mathcal{N}} 
 
\newcommand{\Pcal}   {\mathcal{P}}
\newcommand{\Qcal}   {\mathcal{Q}}

\newcommand{\Wcal}   {\mathcal{W}}







\newcommand{\Zd}    {{ {\Zbold}^d }}


\newcommand{\spose}[1] {{\hbox to 0pt{#1\hss}} }
\newcommand{\ltapprox} {\mathrel{\spose{\lower 3pt\hbox{$\mathchar"218$}}
 \raise 2.0pt\hbox{$\mathchar"13C$}}}
\newcommand{\gtapprox} {\mathrel{\spose{\lower 3pt\hbox{$\mathchar"218$}}
 \raise 2.0pt\hbox{$\mathchar"13E$}}}






\UseSection   
\setcounter{secnumdepth}{3} 
\setcounter{tocdepth}{3}    

\usepackage[usenames]{color}

\definecolor{at}{rgb}{0.0, 0.5, 0.0} 



\newcommand{\DV}{\Dcal}
\newcommand{\DVa}{\alpha}

\renewcommand{\to} {\rightarrow}

\newcommand{\R}{\Rbold}
\newcommand{\Z}{\Zbold}

\newcommand{\N}{\Nbold}
\newcommand{\C}{\mathbb{C}}

\newcommand{\1}{\mathbbm{1}}

\newcommand{\Ex}{\mathbb{E}}

\newcommand{\chicCov}{{\chi}}


\newcommand{\diam}[1]{\textrm{diam}(#1)}

\newcommand{\pt}{{\rm pt}}

\newcommand{\Vbulk}{U}

\newcommand{\lambdach}{\check{\lambda}}

\newcommand{\vch}{\check{v}}

\newcommand{\gbar}{\bar{g}}

\newcommand{\ggen}{\tilde{g}}
\newcommand{\sgen}{\tilde{s}}
\newcommand{\chigen}{\tilde{\chi}}
\newcommand{\mgen}{\tilde{m}}
\newcommand{\Iint}{\mathbb{I}}
\newcommand{\Igen}{\tilde{\mathbb{I}}}

\newcommand{\domRG}{\mathbb{D}}

\newcommand{\pp}{a}
\newcommand{\qq}{b}
\newcommand{\sigmaa}{\sigma}
\newcommand{\sigmab}{\bar{\sigma}}

\newcommand{\half}{\textstyle{\frac 12}}

\newcommand{\phib}{\bar\phi}










\ifdefined\macrosH
  \usepackage{xr-hyper}
  \usepackage{hyperref}
  \hypersetup{hypertexnames=false}
  \hypersetup{colorlinks,citecolor=blue,linkcolor=blue}  

  \externaldocument[norm-]{rg-norm}[rg-norm.pdf]
  \externaldocument[loc-]{rg-loc}[rg-loc.pdf]
  \externaldocument[pt-]{rg-pt}[rg-pt.pdf]
  \externaldocument[IE-]{rg-IE}[rg-IE.pdf]
  \externaldocument[step-]{rg-step}[rg-step.pdf]
  \externaldocument[saw4-]{saw4}[saw4.pdf]
  \externaldocument[log-]{saw4-log}[saw4-log.pdf]
  \externaldocument[phi4-log-]{phi4-log}[phi4-log.pdf]
  \externaldocument[flow-]{rg-flow}[rg-flow.pdf]
  \externaldocument[phi4-]{phi4}[phi4.pdf]
\else\ifdefined\macrosHarxiv
  \usepackage{xr-hyper}
  \usepackage{hyperref}
  \hypersetup{hypertexnames=false}

  \externaldocument[norm-]{rg-norm}[http://arxiv.org/pdf/1403.7244v2.pdf]
  \externaldocument[loc-]{rg-loc}[http://arxiv.org/pdf/1403.7253v2.pdf]
  \externaldocument[pt-]{rg-pt}[http://arxiv.org/pdf/1403.7252v2.pdf]
  \externaldocument[IE-]{rg-IE}[http://arxiv.org/pdf/1403.7255v2.pdf]
  \externaldocument[step-]{rg-step}[http://arxiv.org/pdf/1403.7256v2.pdf]
  \externaldocument[flow-]{rg-flow}[http://arxiv.org/pdf/1211.2477.pdf]
  \externaldocument[saw4-]{saw4}[http://arxiv.org/pdf/1403.7268v2.pdf]
  \externaldocument[log-]{saw4-log}[http://arxiv.org/pdf/1403.7422v2.pdf]
  \externaldocument[phi4-log-]{phi4-log}[http://arxiv.org/pdf/1403.7424.pdf]
\else
  
  \usepackage{xr}
  \externaldocument[norm-]{rg-norm}
  \externaldocument[loc-]{rg-loc}
  \externaldocument[pt-]{rg-pt}
  \externaldocument[IE-]{rg-IE}
  \externaldocument[step-]{rg-step}
  \externaldocument[flow-]{rg-flow}
  \externaldocument[saw4-]{saw4}
  \externaldocument[log-]{saw4-log}
  \externaldocument[phi4-log-]{phi4-log}
  \externaldocument[phi4-]{phi4}
\fi\fi

\title{
  Critical two-point function of the
  \\
  4-dimensional
  weakly self-avoiding walk
}

\author{
  Roland Bauerschmidt\thanks{School of Mathematics,
    Institute for Advanced Study,
    Einstein Drive,
    Princeton, NJ 08540 USA.
    E-mail: {\tt roland@bauerschmidt.ca}.},\;
  David C.\ Brydges\thanks{Department of Mathematics,
    University of British Columbia,
    Vancouver, BC, Canada V6T 1Z2.
    E-mail: {\tt db5d@math.ubc.ca}, {\tt slade@math.ubc.ca}.}\;
  and Gordon Slade$^\dagger$}

\date\version

\begin{document}

\maketitle

\begin{abstract}
We prove $|x|^{-2}$
decay of the critical two-point function
for the continuous-time
weakly self-avoiding walk on $\Zd$, in the
upper critical dimension $d=4$.
This is a statement that the critical exponent $\eta$ exists and
is equal to zero.  Results of this nature have been proved
previously for dimensions $d \ge 5$ using the lace expansion, but the lace
expansion does not apply when $d=4$.
The proof is based on a rigorous renormalisation group analysis of an exact
representation of the continuous-time
weakly self-avoiding walk as
a supersymmetric field theory.
Much of the analysis applies more widely
and has been carried out in a previous paper,
where an asymptotic formula for the susceptibility is
obtained.  Here, we show how observables can be incorporated into the
analysis to obtain a pointwise
asymptotic formula for the critical two-point function.
This involves perturbative calculations similar to those familiar in the physics
literature, but with error terms controlled rigorously.
\end{abstract}

\section{Main result}

\subsection{Introduction}
\label{sec-intro}

The critical behaviour of the self-avoiding walk depends on the spatial dimension
$d$.  The upper critical dimension is 4, and for $d \geq 5$ the lace
expansion has been used to prove that the asymptotic behaviour is
Gaussian \cite{BS85,Hara08,HHS03,HS92a,Slad06}.  In particular,
for the \emph{strictly}
self-avoiding walk in dimensions $d \ge 5$, the critical
two-point function has $|x|^{-(d-2+\eta)}$ decay with critical
exponent $\eta=0$, both for spread-out walks
\cite{CS14,HHS03} and for the nearest-neighbour walk \cite{Hara08}.  For
$d=3$, the problem remains completely unsolved from a mathematical
point of view, but numerical and other evidence
provides convincing evidence that the behaviour is not Gaussian.
In particular, numerical values of the critical exponents $\gamma$ and $\nu$
\cite{Clis10,SBB11}, together
with Fisher's relation $\gamma = (2-\eta) \nu$, indicate that the critical
two-point function has approximate decay $|x|^{-1.031}$ for $d=3$.
For $d=2$, the critical two-point function is
predicted to decay as $|x|^{-5/24}$ \cite{Nien82}, and recent work
suggests that the scaling behaviour is described by ${\rm
SLE}_{8/3}$ \cite{LSW04}, but neither has been proved.  The case of
$d=1$ is of interest for weakly self-avoiding walk, where a fairly
complete understanding has been obtained \cite{Holl09}.
More about mathematical results for self-avoiding walk can be found in
\cite{BDGS12,MS93}.

In the present paper, we prove that
the critical
two-point function of the continuous-time weakly self-avoiding walk
is asymptotic to a multiple of $|x|^{-2}$ as $|x|\to\infty$, in
dimension $d = 4$.
This is a statement that the critical exponent $\eta$
exists and is equal to zero.
The proof is based on a rigorous
renormalisation group method;
a summary of the method and proof is given in \cite{BS11}.
Early indications of the critical nature of the dimension $d=4$
were given in \cite{ACF83,BFF84}, following proofs of triviality of $\phi^4$
field theory above dimension 4 \cite{Aize82,Froh82}.

Logarithmic corrections to scaling are common in statistical mechanical
models at the upper critical dimension, and are predicted for the susceptibility
and correlation length and several other interesting quantities
\cite{LK69,BGZ73,WR73}, but not for the leading decay of the critical two-point function
of the 4-dimensional self-avoiding walk.
In \cite{BBS-saw4-log}, it is proved that the
susceptibility of the 4-dimensional weakly self-avoiding walk does have
a logarithmic correction to scaling, with exponent $\frac 14$.
We now extend the methods of \cite{BBS-saw4-log} to study the critical two-point function.

We use an integral representation to rewrite the two-point function
of the continuous-time weakly self-avoiding walk as the two-point
function of a supersymmetric field theory, and apply a rigorous
renormalisation group argument to analyse the field theory.

Our proof involves an extension of the ideas and structure developed
in \cite{BBS-saw4-log}, and to avoid repetition we refer below
frequently to \cite{BBS-saw4-log} for ideas and notation that apply
without modification to our present purpose.  A feature present here
but not in \cite{BBS-saw4-log} is the use of a complex observable
field $\sigma$; this requires aspects of
\cite{BS-rg-loc,BBS-rg-pt,BS-rg-IE,BS-rg-step} concerning observables
that were not used in \cite{BBS-saw4-log}. A similar extension was
used to study correlations of the dipole gas in \cite{DH92}.

Our general approach applies more widely. In \cite{ST-phi4}, it has
been extended to prove existence of logarithmic corrections to scaling
for 4-dimensional critical networks of weakly self-avoiding walks, and
for critical correlation functions of the 4-dimensional $n$-component
$|\varphi|^4$ spin model.

\subsection{Main result}
\label{sec:ctwsaw}

We now define the two-point function for continuous time
weakly self-avoiding walk, and state our main result.
Fix a dimension $d >0$.  Let $X$ be the stochastic process on $\Zd$
with right-continuous sample paths, that takes its steps at the times
of the events of a rate-$2d$ Poisson process.  Steps are independent both
of the Poisson process and of all other steps, and are taken uniformly
at random to one of the $2d$ nearest neighbours of the current
position.  Let $E_a$ denote the corresponding expectation for the
process started at $X(0)=a$.  The \emph{local time} at $x$ up to time $T$ is
defined by $L_{x,T} = \int_0^T \1_{X(s)=x} ds$, and the \emph{self-intersection local
time} up to time $T$ is $I(T) = \sum_{x\in \Zd} L_{x,T}^{2}$.
The continuous-time weakly
self-avoiding walk \emph{two-point function} is then defined by
\begin{equation}
\label{e:Gwsaw}
    G_{g,\nu}(a,b)
    =
    \int_0^\infty
    E_{a} \left(
    e^{-gI(T)}
    \1_{X(T)=b} \right)
    e^{- \nu T}
    dT,
\end{equation}
where $g>0$, and $\nu$ is a parameter (possibly negative)
chosen so that the integral converges.  By translation
invariance, $G_{g,\nu}(a,b)$ only depends on $a,b$ via $\pp -\qq$.
For $d=4$, the continuous-time weakly self-avoiding walk is identical
to the lattice Edwards model; see \cite[Section~10.1]{MS93}.

In \eqref{e:Gwsaw}, self-intersections are suppressed by the factor
$e^{-gI(T)}$.  In the limit $g \to \infty$, if $\nu$ is
simultaneously sent to $-\infty$ in a suitable $g$-dependent manner,
it is known that the limit of the two-point function \refeq{Gwsaw} is
a multiple of the two-point function of the standard discrete-time
strictly self-avoiding walk \cite{BDS12}.  The model defined by
\refeq{Gwsaw} is predicted to be in the same universality class as the
strictly self-avoiding walk for all $g>0$.  Our analysis is restricted
to small $g>0$.

The \emph{susceptibility} is defined by
\begin{equation}
\label{e:suscept-def}
    \chi_{g}(\nu) =   \sum_{b\in \Zd}  G_{g,\nu}(a,b)
    ,
\end{equation}
and the critical value $\nu_c(g)$ is defined by $\nu_{c} (g)=\inf\{\nu
\in \R : \chi_g(\nu)<\infty\}$.  It is proved in
\cite[Lemma~\ref{log-lem:csub}]{BBS-saw4-log} that $\nu_c=\nu_c(g,d)
\in (-\infty,0]$ for all $g>0$ and $d>0$, and that moreover
\begin{equation}
\label{e:chi-nuc}
    \text{$\chi_g(\nu) < \infty$ \; if and only if \; $\nu > \nu_c$}.
\end{equation}
Moreover, it is shown in \cite[Theorem~\ref{log-thm:nuc}]{BBS-saw4-log} that for
$d = 4$, as $g \downarrow 0$, \begin{equation} \label{e:nucasy}
\nu_c(g) = - {\sf a} g (1+O(g)), \end{equation} where the positive constant
${\sf a}$ is given by ${\sf a} = -2\Delta^{-1}_{00}$.  In particular,
\refeq{nucasy} implies that $\nu_c(g) <0$ for small positive $g$.

Our main result is the following theorem which gives the decay of the
critical two-point function in dimension $4$, for sufficiently small
$g$.

\begin{theorem}
\label{thm:wsaw4} Let $d = 4$.  There exists $\delta >0$ such that for
each $g \in (0,\delta)$ there exists $c(g)= (2\pi)^{-2}(1+O(g))$ such
that as $|\pp -\qq|\to \infty$,
\begin{equation}
\label{e:Gasy}
    G_{g,\nu_c(g)}(a,b) = \frac{c(g)}{|\pp -\qq|^{2}}
    \left( 1 + O\left( \frac{1}{\log |\pp-\qq|} \right) \right).
\end{equation}
\end{theorem}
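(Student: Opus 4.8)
The plan is to adapt the renormalisation group analysis of \cite{BBS-saw4-log}, which treated the susceptibility, and to upgrade it to a pointwise statement by introducing observable fields conjugate to the two endpoints. First I would use the supersymmetric integral representation to rewrite the two-point function \eqref{e:Gwsaw} as a correlation $\langle \bar\phi_a \phi_b \rangle$ of a field theory, in which the self-intersection local time $I(T)$ is traded for a complex boson field $(\phi,\bar\phi)$ together with its fermionic partners, integrated against a Gaussian measure weighted by the self-avoidance interaction, schematically $V\approx g\tau^2+\nu\tau$ with $\tau$ the boson--fermion bilinear. To access this particular correlation I would augment the action with an observable source $\sigma\bar\phi_a+\bar\sigma\phi_b$ built from a complex external field $\sigma$, so that $G_{g,\nu}(a,b)$ is recovered as the coefficient of $\bar\sigma\sigma$ in the resulting partition function; this is precisely the observable device, using the machinery of \cite{BS-rg-loc,BBS-rg-pt,BS-rg-IE,BS-rg-step}, that was not needed for the susceptibility.

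Next I would run the rigorous renormalisation group with a finite-range decomposition of the covariance, $(-\Delta+m^2)^{-1}=\sum_j C_j$ with $C_j$ of range $L^j$ and $m^2\geq0$ a renormalised mass, integrating out one scale at a time as in \cite{BBS-rg-pt,BS-rg-IE,BS-rg-step}. The critical two-point function is the $\nu\downarrow\nu_c$ limit, which corresponds to $m^2\downarrow0$ and forces the number of scales to infinity; the essential geometric fact is that the endpoints $a$ and $b$ lie in a common block only once the scale exceeds the coalescence scale $j_{ab}\approx\log_L|a-b|$, so the accumulation of the observable quantity is effectively complete after $j_{ab}$ scales. The bulk flow of the RG coordinate $(V_j,K_j)$ is exactly the critical trajectory controlled in \cite{BBS-saw4-log}; in particular the marginal coupling decays like $g_j=O(1/j)$. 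On top of this I would track the observable coupling constants: amplitude couplings $\lambda^a_j,\lambda^b_j$ multiplying the source terms, and a scalar coupling $q_j$ that accumulates the $\sigma\bar\sigma$ contribution and ultimately equals the two-point function. Their perturbative flow is supplied by \cite{BBS-rg-pt} and the requisite contraction and localisation estimates by \cite{BS-rg-loc,BS-rg-step}.

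The asymptotics would then be read off from the accumulated coupling $q_{j_{ab}}$, whose leading part factorises as $\lambda^a_\infty\lambda^b_\infty\sum_j [C_j]_{ab}$. Since in $d=4$ the massless lattice propagator $(-\Delta)^{-1}_{ab}$ is asymptotic to $(2\pi)^{-2}|a-b|^{-2}$ as $|a-b|\to\infty$, this produces the constant $c(g)=(2\pi)^{-2}(1+O(g))$, with the renormalised amplitudes satisfying $\lambda^a_\infty=\lambda^b_\infty=1+O(g)$. Because the anomalous dimension vanishes in $d=4$, there is no logarithmic correction to the power $|a-b|^{-2}$ itself; the error $O(1/\log|a-b|)$ comes from the residual marginal coupling $g_{j_{ab}}=O(1/\log|a-b|)$ and from the non-perturbative remainder $K_{j_{ab}}$ evaluated at the coalescence scale.

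The final and hardest step is to make this rigorous with the norm and stability estimates of \cite{BS-rg-loc,BS-rg-IE,BS-rg-step}. The main obstacle, and the essential new work beyond \cite{BBS-saw4-log}, is to show that the observable part of the RG coordinate stays small uniformly in the number of scales and in $m^2$, that the couplings $\lambda^a_j,\lambda^b_j$ and $q_j$ converge as $j\to\infty$ with quantitative rates, and that the non-perturbative remainder contributes only at the level of the claimed $O(1/\log|a-b|)$ error rather than corrupting the leading constant. Incorporating the complex observable field $\sigma$ into the existing framework, and in particular controlling the scale-by-scale interaction between the two separated insertions so as to pass from the summed quantity to a genuinely pointwise asymptotic, is where the real difficulty lies.
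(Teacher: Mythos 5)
Your proposal is correct and follows essentially the same route as the paper: supersymmetric representation with an observable field $\sigma$, progressive integration over a finite-range decomposition, a frozen amplitude flow $\lambda_{a,j},\lambda_{b,j}$ up to the coalescence scale $j_{ab}$, an accumulating coupling $q_j$ whose limit reconstructs $\lambda_a\lambda_b(-\Delta+m^2)^{-1}_{ab}$, and error terms of order $\gbar_{j_{ab}}\prec 1/\log|a-b|$. The only notable difference is that you settle for $\lambda_{x,\infty}=1+O(g)$ (which does suffice for the stated form of $c(g)$), whereas the paper proves $\lambda^*_{x,\infty}=1$ exactly by identifying $\lambda_{a,N}$ with a derivative of $Z_N$ and invoking the known susceptibility asymptotics; also note that $q_j$ only \emph{begins} to accumulate at scale $j_{ab}$, so the answer is read off from $q_\infty$ rather than from $q_{j_{ab}}$.
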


In \cite{BS11}, an extension of Theorem~\ref{thm:wsaw4} states that
the critical two-point function has decay $|a-b|^{2-d}$ for all
dimensions $d\ge 4$, but \cite{BS11} provides only a sketch of proof.
Our principal interest is the critical dimension $d=4$, and we provide
the details of the proof for $d=4$ here.  The restriction to $d=4$
avoids additional complications required to handle general high
dimensions.  We intend to return to the general case in a future
publication.

We define the Laplacian $\Delta$ on $\Zd$ by $(\Delta f)_x = \sum_{e :
|e|=1} (f_{x+e}-f_x)$.  For $g=0$ and $\nu \ge 0$, the two-point
function is given by $G_{0,\nu}(a,b)=(-\Delta + \nu)^{-1}_{ab}$, and
$\nu_c(0)=0$.  Theorem~\ref{thm:wsaw4} proves that for $d = 4$ and
small positive $g$, the critical two-point function has the same
$|a-b|^{-2}$ decay as the lattice Green function $-\Delta^{-1}_{ab}$
on $\Z^4$.  In contrast, for $\nu > \nu_c(g)$, $G_{g,\nu}(a,b)$ decays
exponentially as $|\pp-\qq|\to\infty$; an elementary proof is sketched
below in Proposition~\ref{prop:Glims}.

In \cite[Section~\ref{log-sec:chvar}]{BBS-saw4-log}, it is shown that
the susceptibility of the weakly self-avoiding walk is \emph{equal} to
the susceptibility of the simple random walk with renormalised
diffusion constant (field strength) $1+z_0$ and killing rate (mass)
$m^2$, with $z_0$ and $m^2$ functions of $g,\varepsilon$ with
$\varepsilon =\nu-\nu_c(g)$.  More precisely,
\begin{equation}
    \chi_g(\nu)
    = \frac{1+ z_0}
    {m^2},
\end{equation}
with $z_0=O(g)$ and with $m^2$ asymptotic to a multiple of
$\varepsilon (\log \varepsilon^{-1})^{-1/4}$ as $\varepsilon\downarrow
0$.  In the proof of Theorem~\ref{thm:wsaw4} we extend this
correspondence and prove that (with $z_0=z_0(g,0)$)
\begin{equation}
    G_{g,\nu_c(g)}(a,b) \sim (1+z_0)
    (-\Delta_{\Z^d})^{-1}_{ab}
    \quad (|a-b|\to\infty),
\end{equation}
which shows that the critical interacting two-point function is
asymptotic to the critical non-interacting two-point function, with
the \emph{same} renormalised diffusion constant.

The proof of Theorem~\ref{thm:wsaw4} uses a supersymmetric integral
representation for the two-point function, which requires us to work
first in finite volume and with $\nu > \nu_c$.  Because of this, our
analysis initially stays slightly away from the critical point.  A
related issue is that the Laplacian annihilates constants in finite
volume, and hence is not invertible without the addition of some mass
term.  Ultimately, we first take the infinite volume limit with
$\nu>\nu_c$, and then let $\nu \downarrow \nu_c$.

A variant of the 4-dimensional Edwards model was analysed in
\cite{IM94} using a renormalisation group method.  Although this
variant is not a model of walks taking steps in a lattice, it is
presumably in the same universality class as the 4-dimensional
self-avoiding walk, and the results of \cite{IM94} are of a similar
nature to ours.  For the 4-dimensional $\varphi^4$ model, related
results were obtained via block spin renormalisation in
\cite{GK85,GK86,Hara87,HT87}, and via partially renormalised phase
space expansion in \cite{FMRS87}.  Our method is applied to the
$n$-component $|\varphi|^4$ model in \cite{BBS-phi4-log,ST-phi4}.

\section{Integral representation for the two-point function}

The proof of Theorem~\ref{thm:wsaw4} is based on an integral representation
for a finite volume approximation of the two-point function.
To discuss this, we first show how the two-point function can be approximated
by a two-point function on a finite torus.

\subsection{Finite volume approximation}
\label{sec:fv}

Let $L \ge 3$ and $N \ge 1$ be integers, and let $\Lambda =\Lambda_N = \Z^d/ L^N\Z^d$ denote
the discrete torus of side $L^N$.  We are ultimately interested in the limit
$N \to \infty$, and regard $\Lambda_N$ as a finite volume approximation to $\Zd$.
It is convenient at times to
consider $\Lambda_N$ to be a box (approximately) centred at the origin in $\Z^d$,
without identifying opposite sides to create the torus.
For fixed $a,b\in \Zd$, we can then regard $a,b$ as points in $\Lambda_N$
provided $N$ is large enough, and we make this identification
throughout the paper. In particular, we tacitly assume that $N$ is sufficiently large
to contain given $a,b$.

For $a,b\in \Lambda_N$, let
\begin{equation}
\label{e:GwsawLambda}
    G_{N,g,\nu}(a,b)
    =
    \int_0^\infty
    E_{a}^{\Lambda_N} \left(
    e^{-gI(T)}
    \1_{X(T)=b} \right)
    e^{- \nu T}
    dT,
\end{equation}
where $E_{a}^{\Lambda_N}$ denotes the continuous-time simple random walk
on the torus $\Lambda_N$, started from the point $a$.
By the Cauchy--Schwarz inequality,
$T= \sum_{x\in \Lambda} L_T^x \le (|\Lambda|I(T))^{1/2}$
and hence $I(T) \ge T^2/|\Lambda|$, from which we conclude that
the integral \refeq{GwsawLambda} is finite for all $g>0$ and $\nu\in\R$.
The following proposition shows that the infinite volume two-point
function \eqref{e:Gwsaw} can be approximated by the finite volume
two-point function \eqref{e:GwsawLambda}, and that it is
possible to study the critical two-point function on $\Zd$ in the double limit
in which first $N \to \infty$ and then $\nu \downarrow \nu_c$.

\begin{prop}
\label{prop:Glims}
Let $ d >0$, $g>0$, and $\nu > \nu_c(g)$.  Then
$G_\nu(a,b)$ decays exponentially in $|\pp -\qq|$, and
\begin{equation}
\label{e:Givl}
    G_{g,\nu}(a,b)
    =
    \lim_{N \to \infty} G_{N,g,\nu}(a,b).
\end{equation}
Also, for all $\nu \ge \nu_c(g)$,
\begin{equation}
\label{e:Givlc}
    G_{g,\nu}(a,b)
    =
    \lim_{\nu' \downarrow \nu} \lim_{N \to \infty}
    G_{N,g,\nu'}(a,b).
\end{equation}
\end{prop}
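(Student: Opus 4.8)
The plan is to treat the three assertions separately, using throughout the single input \eqref{e:chi-nuc}, namely that $\chi_g(\nu)<\infty$ precisely for $\nu>\nu_c(g)$. The organising device is a coupling in which the torus walk is the image of the $\Z^d$ walk under the projection $\pi\colon\Z^d\to\Lambda_N$. Writing $L_{x,T}$ for the local times of the walk on $\Z^d$, the torus local time at $x\in\Lambda_N$ is $\sum_{k}L_{x+L^Nk,\,T}$, so the torus self-intersection local time $I_N(T)$ satisfies $I_N(T)\ge I(T)$ by the elementary inequality $(\sum_k s_k)^2\ge\sum_k s_k^2$ for $s_k\ge0$, with equality for as long as the walk has not wound around $\Lambda_N$; and the event $\{X(T)=b\}$ on $\Lambda_N$ pulls back to $\bigcup_k\{X(T)=b+L^Nk\}$ on $\Z^d$. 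This gives the exact rewriting
\begin{equation}
  G_{N,g,\nu}(a,b)=\int_0^\infty E_a\Big(e^{-gI_N(T)}\sum_{k}\1_{X(T)=b+L^Nk}\Big)\,e^{-\nu T}\,dT ,
\end{equation}
where $E_a$ is now the expectation for the walk on $\Z^d$.

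For \eqref{e:Givl} I would let $N\to\infty$ by dominated convergence on $(0,\infty)\times\Omega$ with the measure $e^{-\nu T}\,dT\otimes dP_a$. For fixed $T$ and a fixed realisation the walk takes only finitely many steps on $[0,T]$ and so stays in a bounded region; once $L^N$ exceeds the diameter of that region together with $|a|+|b|$, no winding occurs, whence $I_N(T)=I(T)$ and $\sum_k\1_{X(T)=b+L^Nk}=\1_{X(T)=b}$. The integrand therefore converges pointwise to $e^{-gI(T)}\1_{X(T)=b}$, and by $I_N(T)\ge I(T)$ and $\sum_k\1_{X(T)=b+L^Nk}\le1$ it is dominated by $e^{-gI(T)}$, for which
\begin{equation}
  \int_0^\infty E_a\big(e^{-gI(T)}\big)\,e^{-\nu T}\,dT=\sum_{y\in\Z^d}G_{g,\nu}(a,y)=\chi_g(\nu)<\infty
\end{equation}
by \eqref{e:chi-nuc}. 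Dominated convergence then yields \eqref{e:Givl}.

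For the exponential decay I would \emph{not} discard the interaction---it is essential, since $\nu$ may be nonpositive and the free walk then provides no decay---but exploit it through a tilted two-point function. Fix $h\in\R^d$ and set $\tilde f_T(h)=E_a\big(e^{-gI(T)}e^{h\cdot(X(T)-a)}\big)$. Splitting $[0,s+t]$ at time $s$ and using the Markov property together with $I(s+t)\ge I_{[0,s]}+I_{[s,s+t]}$ (the cross term $2\sum_x L_{x,[0,s]}L_{x,[s,s+t]}$ being nonnegative) gives the submultiplicativity $\tilde f_{s+t}(h)\le\tilde f_s(h)\,\tilde f_t(h)$, so $\psi(h):=\lim_{T\to\infty}T^{-1}\log\tilde f_T(h)=\inf_{T>0}T^{-1}\log\tilde f_T(h)$ exists. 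Each $T^{-1}\log\tilde f_T$ is convex in $h$ by H\"older's inequality, hence so is the pointwise limit $\psi$, which is finite near $0$ because $\tilde f_T(h)\le E_a(e^{\|h\|_\infty J(T)})$ with $J(T)$ the Poisson number of steps. A convex function finite near $0$ is continuous there, and at $h=0$ the abscissa of convergence of $\int_0^\infty\tilde f_T(0)e^{-\nu T}\,dT=\chi_g(\nu)$ identifies $\psi(0)=\nu_c$. Since $\nu>\nu_c=\psi(0)$, continuity forces $\psi(h)<\nu$ for small $h$, so the tilted susceptibility $\chi_g^{(h)}=\int_0^\infty\tilde f_T(h)e^{-\nu T}\,dT$ is finite; as $\chi_g^{(h)}\ge e^{h\cdot(b-a)}G_{g,\nu}(a,b)$, taking $h$ along each coordinate axis yields $G_{g,\nu}(a,b)\le Ce^{-c|a-b|}$. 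I expect this step to be the main obstacle: the two limit interchanges reduce to routine convergence theorems once $\chi_g(\nu)<\infty$ is available, whereas the decay must be manufactured from the self-intersection interaction, and the delicate points are the continuous-time splitting underlying the submultiplicativity and the convexity argument for continuity of $\psi$ at $h=0$.

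Finally, \eqref{e:Givlc} follows by combining \eqref{e:Givl} with a single monotone limit. For $\nu\ge\nu_c$ and $\nu'\downarrow\nu$ one has $\nu'>\nu_c$, so the inner limit equals $G_{g,\nu'}(a,b)$ by \eqref{e:Givl}; and as $\nu'\downarrow\nu$ the factor $e^{-\nu'T}$ increases to $e^{-\nu T}$, so monotone convergence gives $G_{g,\nu'}(a,b)\uparrow\int_0^\infty E_a\big(e^{-gI(T)}\1_{X(T)=b}\big)e^{-\nu T}\,dT=G_{g,\nu}(a,b)$, matching \eqref{e:Gwsaw}. This identity holds in $[0,\infty]$ and so covers the critical case $\nu=\nu_c$ of principal interest, for which finiteness of the limit is a separate matter supplied by Theorem~\ref{thm:wsaw4}.
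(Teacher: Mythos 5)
Your proof is correct, but both of its nontrivial parts run along genuinely different lines from the paper's. For the infinite-volume limit \eqref{e:Givl}, the paper splits $|G_\nu(a,b) - G_{N,\nu}(a,b)|$ into a near part $\int_0^S$ and two tails $\int_S^\infty$, controls the tails using the finite-volume susceptibility bound $\chi_N(\nu)\le\chi(\nu)$ imported from the companion paper together with the margin $\nu-\nu'>0$, and controls the near part by a boundary-hitting estimate for the Poisson number of steps; you instead make the projection coupling explicit, observe $I_N(T)\ge I(T)$, and apply dominated convergence with dominating function $e^{-gI(T)}$, whose integral against $e^{-\nu T}\,dT$ is exactly $\chi_g(\nu)<\infty$. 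Your route is more self-contained (the monotonicity $I_N\ge I$ is the same mechanism that underlies $\chi_N\le\chi$, but you use it directly rather than quoting the lemma) and dispenses with the $S$-truncation. For the exponential decay, the paper splits the $T$-integral at $\alpha|a-b|$: for large $T$ it uses the untilted submultiplicativity $c_{S+T}\le c_S c_T$ and $c_T^{1/T}\to e^{\nu_c}$, so that $e^{-\nu T}$ with $\nu>\nu_c$ supplies the decay, while for small $T$ it discards the interaction and uses a Chernoff bound on the number of steps. Your tilted free energy $\psi(h)$ packages both effects at once: finiteness and continuity of the convex function $\psi$ near $h=0$ (coming from the Poisson exponential moment) plays the role of the paper's short-time Chernoff estimate, and the identification $\psi(0)=\nu_c<\nu$ plays the role of the long-time bound. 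Both are standard subadditivity devices, and since the paper itself only sketches this step, your sketch is at a comparable level of detail; the one point you should make explicit is that Fekete's lemma in continuous time requires the (easy) local boundedness of $T\mapsto\log\tilde f_T(h)$, e.g.\ via $e^{-gT^2-2dT}\le \tilde f_T(0)$. The derivation of \eqref{e:Givlc} from \eqref{e:Givl} by monotone convergence is identical to the paper's.
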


\begin{proof}
We fix $g>0$ and drop it from the notation.  Once we prove
\refeq{Givl}, \refeq{Givlc} then follows because, by monotone
convergence, $G_{g,\nu}(a,b)$ is right continuous for $\nu \ge
\nu_{c}(g)$.

Let $c_T(a,b) = E_{a} (e^{-gI(T)} \1_{X(T)=b} )$ and $c_{N,T}(a,b) =
E_{a}^{\Lambda_N} (e^{-gI(T)} \1_{X(T)=b} )$.  Fix $\nu > \nu' >
\nu_c$ and $S>0$.  By the triangle inequality,
\begin{equation}
\begin{aligned}
\label{e:G-three-contributions}
|\, G_{\nu}(a,b) - G_{N,\nu}(a,b) \, |
&\leq
\int_{0}^{S}  |c_{T}(a,b)-c_{N,T}(a,b)| e^{-\nu T} dT
+  \int_{S}^\infty c_{T}(a,b) e^{-\nu T} dT
\\
&\qquad
+  \int_{S}^{\infty}  c_{N,T}(a,b) e^{-\nu T}dT
.
\end{aligned}
\end{equation}

For the analysis of the right-hand side of \eqref{e:G-three-contributions}, we
define $\chi_{N}(\nu) = \sum_{b \in\Lambda}G_{N,\nu}(a,b)$, and recall from
\cite[Lemma~\ref{log-lem:suscept-finvol}]{BBS-saw4-log} that
$\chi_{N}(\nu) \leq \chi(\nu)$.  From this it follows that
\begin{align}\label{e:W_limsup_assumption}
  \limsup_{N \to \infty} G_{N, \nu'}(a,b)
  \le
  \limsup_{N \to \infty} \chi_{N}(\nu')
  \le \chi(\nu')
  < \infty.
\end{align}
Let $\delta = \nu - \nu' > 0$.  Then
\begin{align}
  \int_{S}^\infty c_{T}(a,b) e^{-\nu T} dT &\leq e^{-\delta S} G_{\nu'}(a,b),
  \\
  \limsup_{N \to \infty} \int_{S}^\infty c_{N,T}(a,b) e^{-\nu T} dT
  &\leq e^{-\delta S} \limsup_{N \to \infty} G_{N,\nu'}(a,b).
\end{align}
This shows that the last two terms in \eqref{e:G-three-contributions}
can be made as small as desired, uniformly in $N$, by choosing $S$
large.

To estimate the first contribution, let $(Y_t)_{t\geq 0}$ be a
rate-$2d$ Poisson process with corresponding probability measure $P$.
Since contributions to the difference $|c_{T}-c_{N,T}|$  only
arise from walks that reach the inner boundary $\partial \Lambda$ of the torus
(identified with a subset of $\Zd$ so that it does have a boundary),
for any $0 \le T \le S$ we have
\begin{align}
  |c_{T}(a,b)-c_{N,T}(a,b)|
  &\leq
  E_a\left(e^{-gI(T)} \1_{\left\{X([0,T])\cap \partial\Lambda\neq\varnothing\right\}}
  \right)
  + E_{a}^{\Lambda}\left(e^{-gI(T)} \1_{\left\{X([0,T])\cap \partial\Lambda\neq\varnothing\right\}}\right)
 \nnb
  &\leq P_a \left\{X\big([0,T]\big)\cap \partial\Lambda\neq\varnothing\right\}
  + P_{a}^{\Lambda} \left\{X\big([0,T]\big)\cap \partial\Lambda\neq\varnothing\right\}
 \nnb
  &\leq 2 P\left\{Y_{T} \geq \diam\Lambda\right\}
  \leq 2 P\left\{Y_{S} \geq \diam\Lambda\right\},
\end{align}
and the right-hand side goes to zero as $N \to\infty$ with $S$ fixed.
By this estimate, the first integral in
\eqref{e:G-three-contributions} converges to $0$ as $N\to\infty$, for
any fixed $S$.  This completes the proof of \refeq{Givl} and hence of
\refeq{Givlc}.

We do not use the exponential decay in this paper, and its proof is standard,
so we only sketch the argument.
Given any $\alpha >0$, we write $x=b-a$ and make the division
\begin{equation}
\lbeq{Gsplit}
    G_\nu(a,b) = \int_0^{\alpha |x|} c_T(a,b) e^{-\nu T} dT
    +
    \int_{\alpha |x|}^\infty c_T(a,b) e^{-\nu T} dT.
\end{equation}
For the second integral on the right-hand side, we
set $c_T=\sum_{b \in \Z^d} c_T(a,b)$ and use $c_T(a,b) \le c_T$.
It can be shown that $c_T$ obeys $c_{S+T}\le c_S c_T$ and that this implies
$c_T^{1/T} \to e^{\nu_c}$
as $T \to \infty$, from which we see that $c_T(a,b)
\le C_\epsilon e^{(\nu_c+\epsilon)T}$ for any $\epsilon >0$.  This gives exponential
decay in $x$ for the second integral.  For the first integral,
we recall the Chernoff estimate for the Poisson distribution, in the form that
if $X$ is Poisson with mean $\lambda$ and
$k >\lambda$, then $P(X>k) \le e^{-\lambda}(e\lambda/k)^k$.
Since a walk can travel from $a$ to $b$ in time $T$ only if the number of steps
taken is at least $x$, it follows from the Chernoff bound that if
$2d\alpha <1$ and  $T \le \alpha |x|$ then
\begin{equation}
    c_T(a,b) \le P(Y_T \ge |x|) \le e^{-2dT} (2d Te)^{|x|} |x|^{-|x|}
    \le
    (2de \alpha)^{|x|}.
\end{equation}
By choosing $\alpha$ sufficiently small depending on $\nu$ (recall that $\nu<0$ is possible),
we see that the first term on the right-hand side
of \refeq{Gsplit} also exhibits exponential decay in $x$.
\end{proof}

\subsection{Integral representation}

We use the supersymmetric integral representation for the two-point function
discussed in detail in \cite[Section~\ref{log-sec:intrep}]{BBS-saw4-log}.
We refer to that discussion for further details, notation, and definitions, and
here we recall the minimum needed for our present purposes.

In terms of the complex
boson field $\phi,\bar\phi$ and conjugate fermion fields $\psi,\bar\psi$ introduced in
\cite[Section~\ref{log-sec:intrep}]{BBS-saw4-log}, and using the same notation,
for $x \in \Lambda$
we define the differential forms
\begin{equation}
    \tau_x
    = \phi_x \bar\phi_x
    + \psi_x  \wedge \bar\psi_x
    ,
\end{equation}
\begin{equation}
\label{e:addDelta}
    \tau_{\Delta,x}
    =
    \frac 12 \Big(
    \phi_{x} (- \Delta \bar{\phi})_{x} + (- \Delta \phi)_{x} \bar{\phi}_{x} +
    \psi_{x} \wedge (- \Delta \bar{\psi})_{x} + (- \Delta \psi)_{x} \wedge \bar{\psi}_{x}
    \Big),
\end{equation}
where $\Delta=\Delta_\Lambda$ is the lattice Laplacian on $\Lambda$ given by $\Delta
\phi_{x} = \sum_{y: |y-x|=1} (\phi_{y} - \phi_{x})$.  Here $\wedge$ denotes
the wedge product; we drop the wedge from the notation subsequently with the
understanding that forms are always multiplied using this anti-commutative product.

Let $\Ex_C$  denote the Gaussian super-expectation with covariance
matrix $C$, as defined in \cite[Definition~\ref{log-def:ExC}]{BBS-saw4-log}.
In \cite[\eqref{log-e:GG2}--\eqref{log-e:Gmgnzdef}]{BBS-saw4-log},
it is shown that for $N<\infty$, $g>0$, $\nu \in \R$, $m^2>0$, and
$z_0>-1$,
\begin{equation}
\lbeq{GNEx}
    G_{N,g,\nu}(a,b)
    =
    (1+z_0) \Ex_C \left( e^{-U_0(\Lambda)} \bar\phi_a\phi_b \right)
    ,
\end{equation}
where $C=(-\Delta + m^2)^{-1}$,
\begin{align}
    \label{e:Vtil0def}
    \Vbulk_{0} (\Lambda)
    &
    =
    \sum_{x\in\Lambda}
    \big(g_{0} \tau_x^2 + \nu_{0} \tau_x + z_{0}\tau_{\Delta ,x}\big)
    ,
\end{align}
and
\begin{equation}
  \label{e:gg0}
  g_0 = g(1+z_0)^2, \quad \quad
  \nu_0 = (1+z_0)\nu-m^2
  .
\end{equation}
The identity \refeq{GNEx} is a rewriting of an identity from
\cite{BEI92,BI03d} that was
inspired by \cite{McKa80,PS80}; see also \cite[Theorem~5.1]{BIS09} for a
self-contained proof.

In \cite{BBS-saw4-log}, we also use \eqref{e:GNEx}, but there write $V$ instead of $U$.
In the present paper, we use $V$ for an extension of $U$ that incorporates also an \emph{observable field},
discussed next.

\subsection{Observable field}
\label{sec:of}

We introduce an external field $\sigma \in \C$ and define
\begin{equation}
    \label{e:V0def}
    V_{0} (\Lambda )
    =
    \Vbulk_{0} (\Lambda)
    -
    \sigma \phib_{\pp} - \sigmab \phi_{\qq}.
\end{equation}
We refer to $\sigma$ as the \emph{observable field}.
Then we can compute the two-point function using the identity
\begin{align}
    G_{N,g,\nu}(a,b)
    &=
    (1+z_0)
    \frac{\partial^2 }{\partial \sigma  \partial \sigmab }
    \Big|_{0}
    \Ex_C  e^{-V_0(\Lambda)}
    ,
\label{e:generating-fn}
\end{align}
which follows from \eqref{e:GNEx}.
To prove Theorem~\ref{thm:wsaw4}, we analyse the derivative of the
Gaussian super-expectation on the right-hand side of \refeq{generating-fn},
without making further
reference to its connection with self-avoiding walks.

An external field is also employed to analyse the susceptibility in
\cite[Section~\ref{log-sec:ga}]{BBS-saw4-log}, but in a different way.
There the external field is a test function $J:\Lambda \to \R$,
and $\Vbulk_0(\Lambda)$ becomes replaced by $\Vbulk_0(\Lambda) -
\sum_{x \in \Lambda}(J_x\bar\phi_x + \bar J_x \phi_x)$.  In
\cite{BBS-saw4-log} the interest is in the \emph{constant} external
field $J_x =1$ for all $x \in \Lambda$, and the macroscopic regularity
of this test function is important.  Here, in contrast, \refeq{V0def}
corresponds to setting $J_x = \sigma \1_{x=a}$ and $\bar J_x =
\bar\sigma \1_{x=b}$ (so the two are not precisely complex
conjugates).  To work with such a \emph{singular} external field, we
use a different analysis based on ideas prepared in
\cite{BS-rg-loc,BS-rg-IE,BS-rg-step}.  It would be desirable to allow
all coupling constants to be spatially varying, not just the external
field. This extension has been achieved for hierarchical models in
\cite{ACG13}.

Our attention to the dependence on the external field is quite
limited: we only wish to compute the derivative \refeq{generating-fn},
and as such we make no use of any functional dependence on
$\sigma,\bar\sigma$ beyond expansion to second order, i.e., including
terms of order $1,\sigma,\bar\sigma, \sigma\bar\sigma$.  We formalise
this notion by identifying quantities with the same expansion to
second order, as follows.  Recall the space $\Ncal$ of even
differential forms introduced in
\cite[Section~\ref{log-sec:intforms}]{BBS-saw4-log}, which we now
denote instead by $\Ncal^\varnothing$.  As in
\cite[\eqref{log-e:psipsib}]{BBS-saw4-log}, an element of
$\Ncal^\varnothing$ has the form
\begin{equation} \label{e:psipsib}
  \sum_{x,y} F_{x,y}(\phi,\bar\phi) \psi^x \bar\psi^{y}
  .
\end{equation}
We extend this notion by now allowing the coefficients $F_{x,y}$ to be
functions of the external field $\sigma,\bar\sigma$ as well as of the
boson field $\phi,\bar\phi$.  Let $\Ncal$ be the resulting algebra of
differential forms.  Let $\Ical$ denote the ideal in $\Ncal$
consisting of those elements of $\Ncal$ whose expansion to second
order in the external field is zero.  The quotient algebra
$\Ncal/\Ical$ then has the direct sum decomposition
\begin{equation}
\label{e:Ncaldecomp}
    \Ncal/\Ical = \Ncal^\varnothing \oplus \Ncal^a \oplus \Ncal^b \oplus \Ncal^{ab},
\end{equation}
where elements of $\Ncal^a, \Ncal^b , \Ncal^{ab}$ are respectively
given by elements of $\Ncal^\varnothing$ multiplied by $\sigma$, by
$\bar\sigma$, and by $\sigma\bar\sigma$.  For example, $\phi_x
\bar\phi_y \psi_x \bar\psi_x \in \Ncal^\varnothing$, and $\sigma
\bar\phi_x \in \Ncal^a$.  There are canonical projections $\pi_\alpha:
\Ncal \to \Ncal^\alpha$ for $\alpha \in \{\varnothing, a, b, ab\}$.
We use the abbreviation $\pi_*=1-\pi_\varnothing =
\pi_a+\pi_b+\pi_{ab}$.  The quotient space is used also in
\cite{BS-rg-loc,BS-rg-IE,BS-rg-step}, e.g., around
\cite[\eqref{loc-e:1Ncaldecomp}]{BS-rg-loc}.  Since we have no further
use of $\Ncal$, to simplify the notation we henceforth write simply
$\Ncal$ instead of $\Ncal/\Ical$.  As functions of the external field,
elements of $\Ncal$ are then polynomials of degree at most $2$, by
definition.  For example, we identify $e^{\sigma\bar\phi_a +
\bar\sigma\phi_b}$ and $1+\sigma\bar\phi_a + \bar\sigma\phi_b +
\sigma\bar\sigma\bar\phi_a\phi_b$, as both are elements of the same
equivalence class in the quotient space.

\section{Renormalisation group map}
\label{sec:rgmap}

In this section, we sketch only the most important ingredients of our
renormalisation group method from
\cite{BBS-rg-pt,BS-rg-IE,BS-rg-step,BBS-saw4-log}.  A more detailed
introduction is given in \cite{BBS-saw4-log} (see also
\cite{BBS-phi4-log,BS11}).

\subsection{Progressive Gaussian integration}
\label{sec:pgi}

We use decompositions of the covariances
$C=(-\Delta_{\Lambda_N}+m^2)^{-1}$ and $(-\Delta_{\Z^4}+m^2)^{-1}$ for
the torus and $\Z^4$, respectively, as discussed in
\cite[Section~\ref{log-sec:decomposition}]{BBS-saw4-log}, and we use
the same notation as in \cite{BBS-saw4-log}.  These decompositions
take the form
\begin{align}
\label{e:Zddecomp}
    (-\Delta_{\Z^4}+m^2)^{-1}
    &=
    \sum_{j=1}^\infty C_j
    \quad\quad
    (m^2 \in [0,\delta)),
\\
\lbeq{Cdef}
    C=
    (-\Delta_{\Lambda_N}+m^2)^{-1}
    &= \sum_{j=1}^{N-1}C_j+C_{N,N}
    \quad\quad
    (m^2 \in (0,\delta)),
\end{align}
where the covariance $C_{N,N}$ is special because of the effect of the
torus.
The particular
finite-range decomposition we use is developed in
\cite{Baue13a,BGM04}, with properties given in \cite{BBS-rg-pt}.
  The finite-range condition is the statement that
$C_{j;x,y}=0$ when $|x-y|\ge \frac 12 L^j$; this condition
is important for results we use from \cite{BS-rg-IE,BS-rg-step}.
As discussed in
\cite[Section~\ref{log-sec:decomposition}]{BBS-saw4-log}, the Gaussian
super-expectation of $F \in \Ncal$ can be carried out progressively,
via the identity
\begin{equation}
    \label{e:progressive}
    \Ex_{C}F
    =
    \big( \Ex_{C_{N,N}}\circ \Ex_{C_{N-1}}\theta \circ \cdots
    \circ \Ex_{C_{1}}\theta\big) F
    .
\end{equation}

The external field $\sigma,\bar\sigma$ is treated as a constant by the
super-expectation.  To compute $\Ex_C e^{-V_0(\Lambda)}$ of
\refeq{generating-fn}, we use \eqref{e:progressive}, and define
\begin{equation}
\label{e:Z0def}
  Z_0 = e^{-V_0(\Lambda)}, \quad Z_{j+1} = \Ex_{C_{j+1}}\theta Z_j \;\;\;
  (j<N).
\end{equation}
For $j+1=N$, we interpret the convolution $\Ex_{C_{j+1}}\theta$ as the
expectation\ $\Ex_{C_{N,N}}$, i.e., the last covariance is taken to be
the one appropriate for the torus $\Lambda_N$.
Then the desired expectation is given by $Z_N^0(0)$, where the
superscript $0$ denotes projection onto the degree-$0$ part of the
differential form (i.e., the fermion field is set to $0$) and the
argument $0$ means that the boson field is evaluated at $\phi=0$.
Thus we are led to study the recursion $Z_j \mapsto Z_{j+1}$.  By
\refeq{generating-fn}, the two-point function is given by
\begin{align}
    G_{N,g,\nu}(a,b)
    &=
    (1+z_0)
    Z_{N;\sigma\bar\sigma}^0(0)
    ,
\label{e:GZN}
\end{align}
where $F_{\sigma\bar\sigma}\in\Ncal^\varnothing$ denotes the coefficient of
$\sigma\bar\sigma$ in $F\in \Ncal$, i.e., $\pi_{ab}Z_N^0 = \sigma\sigmab Z^0_{N;\sigma\bar\sigma}$.

\subsection{The interaction functional}
\label{sec:formofint}

Let $\Qcal^{(0)}$ and $\Qcal^{(1)}$ respectively denote the vector
space of local polynomials of the form
\begin{align}
    \label{e:Vterms}
    V^{(0)}
    &=
    g \tau^{2} + \nu \tau + z \tau_{\Delta}
    -
    \lambda_{\pp}\1_{\pp} \,\sigmaa \bar{\phi}
    -
    \lambda_{\qq}\1_{\qq} \,\sigmab \phi
    ,
    \\
    V^{(1)}
    &=
    V^{(0)}
    -
     \textstyle{\frac 12} \sigmaa\sigmab(q_a\1_{a} + q_b\1_{b})
    ,
\end{align}
where $g,\nu,z \in \R$, $\lambda_a,\lambda_b,q_a,q_b \in \C$, and the indicator
functions are defined by the Kronecker delta $\1_{a,x}=\delta_{a,x}$.
(We believe that in fact only real coupling constants
$\lambda_a,\lambda_b,q_a,q_b$ are required, but we did not prove this and
it costs us nothing to permit complex coupling constants.)
The terms involving $\sigma$ are referred to as \emph{observables},
while the terms involving $\tau^2$, $\tau$, and $\tau_\Delta$ are
\emph{bulk terms}.
We frequently identify elements of $\Qcal^{(0)}$
and $\Qcal^{(1)}$ as sequences $V^{(0)} =
(g,\nu,z,\lambda_a,\lambda_b)$, $V^{(1)} =
(g,\nu,z,\lambda_a,\lambda_b,q_a,q_b)$, and typically write $U = \pi_\varnothing V =
(g,\nu,z)$.

Recall from \cite[Section~\ref{log-sec:polymers}]{BBS-saw4-log}
the set ${\cal B}_j$  of scale-$j$ blocks, and the set ${\cal P}_j$
of scale-$j$ polymers in $\Lambda$.  We also recall from
\cite[Section~\ref{log-sec:if}]{BBS-saw4-log} the interaction functional
$I_j : \Qcal^{(0)} \times \Pcal_j \to \Ncal$ defined
for $B \in {\cal B}_j$, $X \in {\cal P}_j$,
and $V\in \Qcal^{(0)}$ by
\begin{align}
\label{e:Idef}
    I_{j} (V,B)
    & = e^{-V(B)}(1+W_j(V,B)),
    \quad\quad
    I_{j} (V,X)
    =
    \prod_{B\in \Bcal_{j}}I_{j} (V,B)
    ,
\end{align}
where $W_j$ is an explicit quadratic function of $V$ defined in \cite{BBS-rg-pt}.  In
particular, $W_0=0$. We often write simply $I_{j} (X) = I_{j} (V,X)$.
By \eqref{e:Idef},
$I_{0} (V,X)=e^{-V (X)}$ for all $X \subset \Lambda$, with
$V(X)=\sum_{x\in X}V_x$.

Motivation for the definition
\eqref{e:Idef} is given in
\cite[Section~\ref{pt-sec:WPjobs}]{BBS-rg-pt}.  In the present paper,
we do not give the details of the definitions of  $W_j$
and $I_{j}$
since we do not need them here. They are, however,
important in \cite{BBS-rg-pt,BS-rg-IE,BS-rg-step} and we
rely on results from those references.
The
$V$ domain of $I_j$ is larger here than in \cite{BBS-saw4-log}, due
to the presence of observables, but the larger domain is permitted and present
in the analysis of \cite{BBS-rg-pt,BS-rg-IE,BS-rg-step}.

\subsection{Renormalisation group coordinates}
\label{sec:rgcoord}

Given $F_1, F_2 :{\cal P}_j \to \Ncal$, we define the \emph{circle product}
$F_1 \circ F_2 :{\cal P}_j \to \Ncal$ by
\begin{equation}
    (F_1 \circ F_2)(Y) = \sum_{X\in {\cal P}_j: X\subset Y} F_1(X) F_2(Y \setminus X)
    \quad\quad (Y \in \Pcal_j).
\end{equation}
The terms $X=\varnothing$ and $X=\Lambda$ are included
in the summation on the right-hand side, and we demand that all functions
$F : \Pcal_j \to \Ncal$ obey $F(\varnothing)=1$.
The circle product depends on the scale~$j$, is associative, and is also commutative
due to our restriction in $\Ncal$ to forms of even degree.
Its identity element is $\1_\varnothing$, defined by
$\1_\varnothing(X) = 1$ if $X$ is empty, and otherwise $\1_\varnothing(X) = 0$.

In the definition of $I_{0}$ we set $V=V_{0}$, with $V_0$
defined in \eqref{e:V0def}, so that $I_{0} (X) = I_{0}
(V_0,X)=e^{-V_{0} (X)}$ for all $X \subset \Lambda$.  Let $K_0 :
\Pcal_0 \to \Ncal$ be defined by $K_0 = \1_{\varnothing}$, and set
$q_0=0$.  Then $Z_{0}=I_0(V_0,\Lambda)$ of \eqref{e:Z0def} is also
given by
\begin{equation}
\label{e:Zinit}
    Z_0 = I_0(\Lambda) =  e^{q_0\sigma\bar\sigma}(I_0 \circ K_0)(\Lambda)
    .
\end{equation}
Our strategy is to define $q_j \in \C$,
$V_j \in \Qcal^{(0)}$,
$K_{j}:\Pcal_{j}\rightarrow \Ncal$, and set $I_j=I_j(V_j)$,
so as to maintain this form as
\begin{equation}\label{e:Zjcirc}
    Z_j =
    e^{q_j\sigma\bar\sigma}
    (I_j \circ K_j)(\Lambda)
    \quad\quad
    (0 \le j \le N)
\end{equation}
in the recursion $Z_j \mapsto Z_{j+1} = \Ex_{C_{j+1}}\theta Z_j$ of
\eqref{e:Z0def}, with the initial condition given by \eqref{e:Zinit}.
At the final scale $j=N$,
the only two polymers are the single block $\Lambda=\Lambda_N$ and
the empty set $\varnothing$,
and since $I_j(\varnothing)=K_j(\varnothing)=1$, by assumption,
\eqref{e:Zjcirc} simply reads
\begin{equation}\label{e:ZNcirc}
  Z_N
  =
  e^{q_N\sigma\bar\sigma}
  (I_N \circ K_N)(\Lambda)
  =
  e^{q_N\sigma\bar\sigma} (I_N(\Lambda)+K_N(\Lambda))
  .
\end{equation}
If we set  $\delta q_{j+1} = q_{j+1}-q_j$,
then \refeq{Zjcirc} can equivalently be written as
\begin{equation} \label{e:IcircKdq}
  \Ex_{C_{j+1}}\theta(I_j \circ K_j)(\Lambda)
  =
  e^{\delta q_{j+1}\sigma\bar\sigma}(I_{j+1} \circ K_{j+1})(\Lambda)
  .
\end{equation}
In view of \refeq{IcircKdq}, and since $I_j$ is determined by $V_j$,
we are led to study the \emph{renormalisation group map
\begin{equation} \lbeq{RGmap}
    (V_j,K_j) \mapsto (\delta q_{j+1},V_{j+1},K_{j+1}).
\end{equation}}%
The coupling constants of $V_j \in \Qcal^{(0)}$ are written as
$g_j,\nu_j,z_j,\lambda_{a,j},\lambda_{b,j}$.
Ultimately we express the two-point function in terms of the sequence
$(q_j)$, so this sequence is fundamentally important in the proof of
Theorem~\ref{thm:wsaw4}.  Our construction creates $\delta q_j$ as the
average
\begin{equation}
\lbeq{delqdef}
    \delta q_j=\frac 12 (\delta q_{a,j}+ \delta q_{b,j})
\end{equation}
of two sequences $\delta q_{a,j}$ and $\delta_{b,j}$ (see
\cite[\eqref{step-e:Rplusdef}]{BS-rg-step}).

\subsection{Renormalisation group map}

To implement the above strategy, given suitable
$V_j \in \Qcal^{(0)}$ and $K_j : \Pcal_j \to \Ncal$, we
define $\delta q_{j+1} \in \C$,
$V_{j+1}\in \Qcal^{(0)}$ and
$K_{j+1} : \Pcal_{j+1} \to \Ncal$ in such a way that
\begin{equation}
    \label{e:Kspace-objective}
    Z_{j+1}
     = \Ex_{C_{j+1}}\theta Z_j
    =
    e^{q_j\sigma\bar\sigma}
    \Ex_{C_{j+1}}\theta (I_j \circ K_j)(\Lambda)
    =
    e^{q_{j+1}\sigma\bar\sigma}
    (I_{j+1} \circ K_{j+1})(\Lambda)
    \quad (j<N)
    .
\end{equation}
Thus \refeq{Zjcirc} does retain its form under progressive integration.
We use the explicit choice for the renormalisation group map
\eqref{e:RGmap} that is given in \cite{BS-rg-step}, from now on.  This
choice achieves \eqref{e:Kspace-objective} for fixed $j<N$, assuming
that $(V_j,K_j)$ is in an appropriate domain, and it provides good
estimates for $(\delta q_{j+1},V_{j+1},K_{j+1})$.

To simplify the notation, we set $V_+ = (\delta q_+,
V_+^{\smash{(0)}}) \in \Qcal^{(1)}$ and write \refeq{RGmap} as $(V,K)
\mapsto (V_+,K_+)$.  We typically drop subscripts $j$ and write $+$ in
place of $j+1$, also leave the dependence of the maps on the mass
parameter $m^2$ of the covariance $(-\Delta+m^2)^{-1}$ implicit.

\subsection{Bulk flow}
\label{sec:bulk}

By \cite[\eqref{step-e:piVKplus}]{BS-rg-step}, the renormalisation
group map has the property
\begin{equation} \label{e:bulk}
  \pi_\varnothing V_+(V,K) = V_+(\pi_\varnothing V,\pi_\varnothing K), \quad
  \pi_\varnothing K_+(V,K) = K_+(\pi_\varnothing V,\pi_\varnothing K).
\end{equation}
Thus, under \eqref{e:RGmap},
the \emph{bulk coordinates} $(\pi_\varnothing V_j, \pi_\varnothing K_j)$ satisfy
a closed evolution equation of their own. We denote its evolution map by
$(V_+^\varnothing,K_+^\varnothing)$ and write $U = \pi_\varnothing V$.
Then \eqref{e:RGmap} reduces to
\begin{equation} \label{e:RGmapbulk}
  (U_{j+1}, \pi_\varnothing K_{j+1}) = (V_+^\varnothing(U_j,\pi_\varnothing K_j), K_+^\varnothing(U_j,\pi_\varnothing K_j)).
\end{equation}

The construction of a critical global
renormalisation group flow of the bulk coordinates \eqref{e:RGmapbulk} is achieved in
\cite{BBS-saw4-log}.  Namely, there is a construction of $(U_j,\pi_\varnothing K_j)$ for
$0 \le j \le N$ such that
\eqref{e:RGmapbulk} holds for all $0 \le j \le N$.
This construction provides detailed information about the
sequence $U_j$, and good estimates on $\pi_\varnothing K_j$, sufficient for studying the infinite
volume limit at the critical point.
In Section~\ref{sec:obflow}, we use this bulk flow to study observables.

It is convenient to change perspective on which variables are independent.
The
weakly self-avoiding walk has parameters $g,\nu$.
In \eqref{e:Vtil0def}, additional parameters $m^2$, $g_0$, $\nu_0$, $z_0$ were introduced.
For the moment we consider these as independent variables and do not consider $g,\nu$
directly.
The relation between $m^2,g_0,\nu_0,z_0$
and the original parameters $g,\nu$ is addressed in Section~\ref{sec:changevariables}.

To state the result about the bulk flow,
let $\gbar_j$ be the $(m^2,g_0)$-dependent sequence determined by
$\gbar_{j+1}=\gbar_j - \beta_j\gbar_j^2$ with $\gbar_0=g_0$ and
with $\beta_j=\beta_j(m^2)$ defined in \cite[\eqref{log-e:betadef}]{BBS-saw4-log}.
We also recall the sequence $\chi_j$ defined in
\cite[\eqref{log-e:chidef}]{BBS-saw4-log}, but its precise definition is not
important for our present needs.
It obeys $0 \le \chi_j \le 1$,
eventually decays exponentially when $m^2>0$, and is
identically equal to $1$ when $m^2=0$.  Also, by
\cite[Proposition~\ref{log-prop:approximate-flow}]{BBS-saw4-log}
and
\cite[\eqref{log-e:chigbd-bis}]{BBS-saw4-log} respectively,
\begin{equation}
  \lbeq{chiglim}
    \chi_j \gbar_j \leq O\left(\frac{g_0}{1+g_0j}\right)
    \quad
    \text{uniformly in $(m^2,g_0) \in [0,\delta)^2$},
\end{equation}
\begin{equation}
  \lbeq{chisum}
    \sum_{k=j}^\infty \chi_k \gbar_k^2 = O(\chi_j \gbar_j).
\end{equation}
Without multiplication by $\chi_j$,
the sequence $\gbar_j$ converges to $0$ when $m^2=0$ but not when $m^2>0$.
(To apply \eqref{e:Givlc}, in which the limit $\nu \downarrow \nu_c$
follows the limit $N \to \infty$,
we do consider limits $j \to\infty$ with $m^2>0$, corresponding
to $\nu > \nu_c$, to prove Theorem~\ref{thm:wsaw4}.)

The following theorem is a reduced version of
\cite[Proposition~\ref{log-prop:KjNbd}]{BBS-saw4-log}.
Some of its notation is explained after the statement.

\begin{theorem} \label{thm:bulkflow}
  Let $d=4$ and let $\delta>0$ be sufficiently small.
  There exist $M>0$ and an infinite sequence
  of functions $U_j=(g_j^c,\nu_j^c,z_j^c)$ of $(m^2,g_0) \in [0,\delta)^2$,
  independent of $N \in \N$, such that:
  \smallskip
  \\
  (i) assuming $\sigma=0$, given $N \in \N$,
  for initial conditions $U_0 = (g_0,\nu_0^c,z_0^c)$ with $g_0 \in (0,\delta)$, $K_0=\1_{\varnothing}$,
  and with mass $m^2 \in [0,\delta)$,
  a flow  $(U_j,K_j) \in \domRG_j^\varnothing$
  exists such that \eqref{e:RGmapbulk} holds for all $j+1 < N$,
  and given $m^2 \in [\delta L^{-2(N-1)}, \delta)$, also for $j+1=N$.
  Then, in particular,
  \begin{align}
    \lbeq{VVbar1-bulkflow}
    \|K_j\|_{\Wcal_j}
    =     \|\pi_\varnothing K_j\|_{\Wcal_j}
    &\leq M\chicCov_j
    \gbar_j^3
    \quad ( j \le N)
  \end{align}
  and $g_j^c = O(\gbar_j)$. In addition, $z_j^c = O(\chi_j \gbar_j)$ and $\nu_j = O(\chi_j L^{-2j} \gbar_j)$.

  \smallskip\noindent
  (ii) $z_0^c, \nu_0^c$ are continuous in $(m^2,g_0) \in [0,\delta)^2$.
\end{theorem}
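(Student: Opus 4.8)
The plan is to deduce Theorem~\ref{thm:bulkflow} from \cite[Proposition~\ref{log-prop:KjNbd}]{BBS-saw4-log}, which already constructs precisely the critical bulk flow we require; the present statement only isolates the consequences used later in the paper, so I would organise the argument around extraction and translation rather than a fresh construction. The genuine content, namely the existence of a global critical trajectory, lives in the companion paper, and my task here is to verify that its hypotheses apply and to read off the estimates in the form stated.

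First I would observe that, by the decoupling \eqref{e:bulk}, the bulk coordinates $(U_j,\pi_\varnothing K_j)$ evolve under the closed map \eqref{e:RGmapbulk}, and that setting $\sigma=0$ removes the observable sector entirely. Consequently \eqref{e:RGmapbulk} coincides with the renormalisation group map analysed in \cite{BBS-saw4-log} (written there with $V$ in place of $U$, but with the same bulk polynomial $g\tau^2+\nu\tau+z\tau_\Delta$ and the same $W_j$), and the reduced domain $\domRG_j^\varnothing$ is the $\pi_\varnothing$-image of the domain used there. Thus the hypotheses of the cited proposition hold verbatim, since none of them involve the observable coupling constants.

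Next I would import the conclusions. The existence of a sequence $U_j=(g_j^c,\nu_j^c,z_j^c)$, independent of $N$ and solving \eqref{e:RGmapbulk} over the stated ranges of $j$ and $m^2$, is exactly the content of the cited proposition: the critical initial conditions $(\nu_0^c,z_0^c)$ are produced there by a fixed-point argument that tunes these two parameters so that the trajectory follows the stable manifold of the Gaussian fixed point. The quantitative bounds $\|K_j\|_{\Wcal_j}\le M\chi_j\gbar_j^3$, $g_j^c=O(\gbar_j)$, $z_j^c=O(\chi_j\gbar_j)$ and $\nu_j=O(\chi_j L^{-2j}\gbar_j)$ are the estimates carried through that construction, which rests on the comparison of $\chi_j\gbar_j$ with the perturbative flow $\gbar_{j+1}=\gbar_j-\beta_j\gbar_j^2$ expressed in \refeq{chiglim} and \refeq{chisum}. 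For part (ii) I would invoke the continuity assertion of the same proposition: since the renormalisation group map is jointly continuous in the coupling constants and in $m^2$, and the fixed-point argument is a uniform contraction on the relevant space of sequences, the critical data $(z_0^c,\nu_0^c)$ depend continuously on $(m^2,g_0)\in[0,\delta)^2$, including at $m^2=0$.

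The main obstacle is not analytic but one of bookkeeping: one must check that $\domRG_j^\varnothing$ is indeed the bulk projection of the full domain of \cite{BBS-saw4-log}, and that no hypothesis of the cited proposition covertly uses the observable couplings---which by \eqref{e:bulk} it does not. The genuinely hard part, namely constructing a global trajectory through a non-hyperbolic fixed point whose coupling $g$ is only marginally irrelevant, has already been overcome in the companion paper, so here the work is purely one of citation and notational matching.
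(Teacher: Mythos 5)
Your proposal is correct and matches the paper's treatment exactly: the paper offers no independent proof of Theorem~\ref{thm:bulkflow}, stating only that it is a reduced version of \cite[Proposition~\ref{log-prop:KjNbd}]{BBS-saw4-log}, so the content is precisely the citation-and-translation exercise you describe, with the decoupling \eqref{e:bulk} justifying the restriction to the bulk sector.
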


The definition of the $\Wcal_j$ norm in \refeq{VVbar1-bulkflow} is
discussed at length in \cite{BS-rg-step}, and we do not repeat the
details here, as we now only need the fact that
\refeq{VVbar1-bulkflow} with $j=N$ implies that
\begin{equation}
\lbeq{Kgnull}
    |\pi_\varnothing K_N^0(\Lambda)|
    \le M
    \chi_N \gbar_N^3,
\end{equation}
uniformly in $m^2 \in [\delta L^{-2(N-1)}, \delta)$, as a consequence
of \cite[\eqref{step-e:Kg1}]{BS-rg-step}.

The $\Wcal_j=\Wcal_j(\sgen)$ norm depends on a parameter $\sgen =
(\mgen^2,\ggen) \in [0,\delta) \times (0,\delta)$.  Its significance
is discussed in \cite[Section~\ref{log-sec:flow-norms}]{BBS-saw4-log}.
In particular, useful choices of this parameter depend on the scale
$j$, as well as on approximate values of the mass parameter $m^2$ of
the covariance and the coupling constant $g_j$.
Throughout the paper, we use the convention that when the parameter
$\sgen$ is omitted, it is given by $\sgen = s_j =
(m^2,\ggen_j(m^2,g_0))$.  Here $\mgen^2=m^2$ is the mass parameter of
the covariance, and $\ggen = \ggen_j$ is defined in terms of the
initial condition $g_0$ by
\begin{equation} \label{e:ggendef}
  \ggen_j = \ggen_j(m^2,g_0)
  = \gbar_j(0,g_0) \1_{j \le j_m} + \gbar_{j_m}(0,g_0) \1_{j > j_m},
\end{equation}
where the \emph{mass scale} $j_m$ is the smallest integer $j$ such
that $L^{2j}m^2 \ge 1$.  By
\cite[Lemma~\ref{log-lem:gbarmcomp}]{BBS-saw4-log},
\begin{equation}
\lbeq{gbarggen}
    \ggen_j=\gbar_j+O(\gbar_j^2),
\end{equation}
so the two sequences are the same to
leading order.  However, $\ggen_j$ is more convenient for aspects of the analysis in \cite{BBS-saw4-log}.

The domain $\domRG_j^\varnothing = \domRG_j^\varnothing(\sgen)$ also
depends on $\sgen$ (with the same convention when the parameter is
omitted) and is defined as follows.  For the universal constant $C_\DV
\ge 2$ determined in \cite{BBS-saw4-log}, for $j<N$,
\begin{equation} \label{e:domRGbulk}
  \domRG_j^\varnothing(\sgen)
  = \{(g,\nu,z)\in \R^3 :
    C_{\DV}^{-1} \ggen <  g < C_{\DV} \ggen,
    \; L^{2j}|\nu|,|z| \le C_\DV \ggen \}
    \times B_{\Wcal_j^\varnothing}(\DVa\chigen_j\ggen^3).
\end{equation}
The first factor is the important stability
domain defined in \cite[\eqref{step-e:DV1}]{BS-rg-IE}, restricted to
the bulk coordinates and real scalars.
In the second factor,
$B_X(a)$ denotes the open ball of radius $a$
centred at the origin of the Banach space $X$, and  $\DVa$ is fixed in
\cite{BBS-saw4-log}; it can be taken to be $10M$ where $M$ is the constant of
Theorem~\ref{thm:bulkflow}.
Compared to \cite{BS-rg-step}, we have replaced $\chi^{3/2}$ by
$\chi$ for notational convenience.
The space $\Wcal^\varnothing$ is the restriction of $\Wcal$ to
elements $K \in \Wcal$ with $\pi_*K(X) = 0$ for all polymers
$X$. Since the renormalisation group acts triangularly, by
\eqref{e:bulk}, the distinction between $\Wcal$ and
$\Wcal^\varnothing$ is unimportant for the bulk flow, and
$\Wcal^\varnothing$ is denoted by $\Wcal$ in \cite{BBS-saw4-log}.

\subsection{Change of variables}
\label{sec:changevariables}

Theorem~\ref{thm:bulkflow} is stated in terms of the parameters $m^2,g_0$, rather than
the parameters $g,\nu$ of the weakly self-avoiding walk.
The following proposition,
proved in \cite[Proposition~\ref{log-prop:changevariables}(ii)]{BBS-saw4-log},
relates these sets of parameters via the functions $z_0^c,\nu_0^c$ of
Theorem~\ref{thm:bulkflow} and \eqref{e:gg0}.

\begin{prop} \label{prop:changevariables}
  Let $d=4$ and let $\delta_1>0$ be sufficiently small.
  There exists a function $[0,\delta_1)^2 \to [0,\delta)^2$, written
  $(g,\varepsilon) \mapsto (\tilde m^2(g,\varepsilon),\tilde g_0(g,\varepsilon))$,
  such that \eqref{e:gg0} holds with $\nu = \nu_c(g)+\varepsilon$,
  if $z_0 = z_0^c(\tilde m^2,\tilde g_0)$ and $\nu_0 = \nu_0^c(\tilde m^2,\tilde g_0)$.
  The functions $\tilde m, \tilde g_0$ are right-continuous as $\varepsilon \downarrow 0$,
  with $\tilde m^2(g,0) = 0$, and $\tilde m^2(g,\varepsilon) > 0$ if $\varepsilon > 0$.
\end{prop}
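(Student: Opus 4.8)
The plan is to solve the coupled system obtained by substituting $g_0 = \tilde g_0$, $m^2 = \tilde m^2$, $z_0 = z_0^c(\tilde m^2,\tilde g_0)$, $\nu_0 = \nu_0^c(\tilde m^2,\tilde g_0)$ and $\nu = \nu_c(g)+\varepsilon$ into \eqref{e:gg0}, namely
\begin{align}
  \tilde g_0 &= g\big(1+z_0^c(\tilde m^2,\tilde g_0)\big)^2, \label{e:cv1}\\
  \nu_0^c(\tilde m^2,\tilde g_0) &= \big(1+z_0^c(\tilde m^2,\tilde g_0)\big)\big(\nu_c(g)+\varepsilon\big)-\tilde m^2, \label{e:cv2}
\end{align}
for the unknowns $(\tilde m^2,\tilde g_0)$ as functions of $(g,\varepsilon)$. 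The first thing to notice is that the system decouples: \eqref{e:cv1} does not involve $\varepsilon$ at all, and couples $g_0$ to the mass only through $z_0^c$, while \eqref{e:cv2} then fixes the mass--$\varepsilon$ relation.

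First I would solve \eqref{e:cv1} for $\tilde g_0$ as a function of $(g,m^2)$, treating $m^2$ as a free parameter. By Theorem~\ref{thm:bulkflow} the function $z_0^c$ is continuous and obeys $z_0^c = O(g_0)$ uniformly in $(m^2,g_0)$, so for $\delta$ small the map $g_0 \mapsto g\big(1+z_0^c(m^2,g_0)\big)^2$ is a contraction on a small interval about $g$. The Banach fixed point theorem then yields a unique solution $\tilde g_0 = \tilde g_0(g,m^2)$, continuous in $(g,m^2)$ and satisfying $\tilde g_0 = g(1+O(g))$.

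Next I would substitute $\tilde g_0(g,m^2)$ into \eqref{e:cv2} and solve for $\nu$, obtaining
\begin{equation}
  \nu = N(m^2) := \frac{\nu_0^c\big(m^2,\tilde g_0(g,m^2)\big)+m^2}{1+z_0^c\big(m^2,\tilde g_0(g,m^2)\big)}, \label{e:cvN}
\end{equation}
which is continuous on $[0,\delta)$ by Theorem~\ref{thm:bulkflow}(ii) and right-continuous at $0$. The value $m^2=0$ drives the critical global flow of Theorem~\ref{thm:bulkflow}, for which the susceptibility identity $\chi_g(\nu) = (1+z_0)/m^2$ diverges; together with \eqref{e:chi-nuc} this identifies $N(0) = \nu_c(g)$, i.e.\ $\varepsilon=0$ corresponds to $\tilde m^2 = 0$. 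It then remains to invert $m^2 \mapsto N(m^2)$ near $0$, define $\tilde m^2(g,\varepsilon)$ as the solution of $N(\tilde m^2) = \nu_c(g)+\varepsilon$, and set $\tilde g_0(g,\varepsilon) := \tilde g_0(g,\tilde m^2(g,\varepsilon))$; right-continuity of $\tilde g_0$ in $\varepsilon$ would then follow by composition with the continuous map from the first step.

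The hard part is this final inversion. Unlike \eqref{e:cv1}, the relation \eqref{e:cvN} cannot be inverted by a naive contraction, because the mass--$\varepsilon$ correspondence degenerates at the critical point: the known asymptotics of $m^2$ as a multiple of $\varepsilon(\log \varepsilon^{-1})^{-1/4}$ show that $dm^2/d\varepsilon$ is neither bounded above nor bounded away from $0$ as $\varepsilon \downarrow 0$. To establish that $N$ is strictly increasing on a right-neighbourhood of $0$ (hence invertible with a right-continuous inverse satisfying $\tilde m^2(g,0)=0$ and $\tilde m^2(g,\varepsilon)>0$ for $\varepsilon>0$), I would need the quantitative control of $\nu_0^c$ and $z_0^c$ as $m^2 \downarrow 0$ coming from the flow construction underlying Theorem~\ref{thm:bulkflow}, rather than mere continuity; monotonicity of $N$ then delivers both the sign statement and the right-continuity demanded in the proposition.
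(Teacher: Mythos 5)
Your setup is the right one (it is the same system the paper solves, via the companion paper \cite[Proposition~4.2]{BBS-saw4-log}), and you correctly isolate the inversion of $m^2 \mapsto N(m^2)$ as the crux. But that crux is exactly the step you do not carry out: you announce that you would prove strict monotonicity of $N$ near $0$ using ``quantitative control of $\nu_0^c$ and $z_0^c$ as $m^2\downarrow 0$,'' without saying what control or how it yields monotonicity. As it stands this is a genuine gap, and it is also a harder route than necessary. The proposition does not require $N$ to be invertible or monotone; it only asserts the existence of \emph{some} right-continuous selection $\tilde m^2(g,\varepsilon)$ with $\tilde m^2(g,0)=0$ and $\tilde m^2>0$ for $\varepsilon>0$. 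The paper's argument gets all of this from the susceptibility identity $\chi_g(N(m^2)) = (1+z_0^c)/\tilde m^2$ together with \eqref{e:chi-nuc} and the strict monotone decrease of $\nu\mapsto\chi_g(\nu)$: finiteness of the right-hand side for $m^2>0$ forces $N(m^2)>\nu_c$; divergence as $m^2\downarrow 0$ forces $N(m^2)\to\nu_c$, hence $N(0)=\nu_c$ by continuity; the intermediate value theorem then produces $\tilde m^2(g,\varepsilon)$ solving $N(\tilde m^2)=\nu_c+\varepsilon$ for small $\varepsilon>0$; and the divergence $\chi_g(\nu_c+\varepsilon)\to\infty$ as $\varepsilon\downarrow 0$ forces $\tilde m^2(g,\varepsilon)\to 0$, giving right-continuity at $\varepsilon=0$. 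You invoke the susceptibility identity only to identify $N(0)$, and then abandon it at the point where it does the real work; no derivative estimates on the critical initial conditions are needed.

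A secondary issue: your contraction argument for \eqref{e:cv1} needs a Lipschitz bound on $g_0\mapsto z_0^c(m^2,g_0)$ with small constant, which is not part of Theorem~\ref{thm:bulkflow} as stated here (it only gives continuity and $z_0^c=O(g_0)$). Such bounds do exist in \cite{BBS-saw4-log}, so this is repairable, but continuity plus an $O(g_0)$ bound alone does not make the map a contraction; alternatively an intermediate-value argument gives existence of $\tilde g_0$ directly, at the cost of having to argue separately for a continuous selection.
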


We also write
\begin{equation}
    \tilde z_0(g,\varepsilon) =
    z_0^c(\tilde m^2(g,\varepsilon),\tilde g_0(g,\varepsilon)),
    \quad\quad
    \tilde \nu_0(g,\varepsilon) =
    \nu_0^c(\tilde m^2(g,\varepsilon),\tilde g_0(g,\varepsilon)).
\end{equation}
The functions $\tilde z_0, \tilde \nu_0$ are right-continuous as $\varepsilon \downarrow 0$.
For the problem without observables, considered in \cite{BBS-saw4-log}, we
analysed the sequence $Z_j$ by choosing variables as follows.
First,
starting from $(g,\nu)$, Proposition~\ref{prop:changevariables}
gives us $(\mgen^2,\ggen_0)$, and then Theorem~\ref{thm:bulkflow} gives us
an initial condition $U_0=(\ggen_0,\tilde z_0,\tilde \nu_0)$ for which there
exists a global bulk flow of the renormalisation group map.
In the next section, we extend this to include observables.

\section{Observable flow}
\label{sec:obflow}

It follows
from Proposition~\ref{prop:Glims} and \refeq{GZN} that
\begin{equation}
\lbeq{Gil-bis}
    G_{g,\nu_c}(a,b)
    =
    \lim_{\varepsilon \downarrow 0}\lim_{N \to \infty}
    G_{N,g,\nu_c + \varepsilon}(a,b)
    =
    \lim_{\varepsilon \downarrow 0}
    \big((1+z_0)
    \lim_{N\to\infty}
    Z_{N;\sigma\bar\sigma}^0\big)
    ,
\end{equation}
provided the parameters $(m^2,g_0,\nu_0,z_0)$ implicit on the right-hand side
obey \eqref{e:gg0} with $\nu = \nu_c(g)+\varepsilon$.
To analyse \eqref{e:Gil-bis} via the renormalisation group flow, our
remaining task is to supplement the bulk flow of
Theorem~\ref{thm:bulkflow} with the flow of the observable coupling
constants $\lambda_{a,j},\lambda_{b,j},q_{a,j},q_{b,j}$ and of the
observable part $\pi_*K_j$ of $K_j$. In other words, we extend
Theorem~\ref{thm:bulkflow} to the case of nonzero $\sigma$.  This is
truly an extension, in the sense that the bulk flow needs no
modification because the equations for
$\lambda_{a,j},\lambda_{b,j},q_{a,j},q_{b,j},\pi_* K_j$ depend on but
do not appear in the flow of $(g_j,z_j,\nu_j,\pi_\varnothing K_j)$
which corresponds to $\sigma=0$, by \eqref{e:bulk}.  With the
estimates provided by \cite{BS-rg-step}, we will prove
Theorem~\ref{thm:wsaw4} using the kind of perturbative calculations
familiar in the physics literature, in a mathematically rigorous
manner.

\subsection{Perturbative flow of observables}
\label{sec:pt}

\begin{defn}
\label{def:jab}
Given $a,b \in \Lambda$, the \emph{coalescence scale} $j_{\pp\qq}$ is
defined by
\begin{equation}
   \label{e:Phi-def-jc}
   j_{\pp \qq} =
   \big\lfloor
   \log_{L} (2 |\pp - \qq|)
   \big\rfloor
    .
\end{equation}
\end{defn}

The coalescence scale is related to the finite-range property of the covariance
decomposition mentioned in Section~\ref{sec:pgi}, namely that $C_{j;x,y}=0$ if $|x-y|\ge \frac 12 L^j$.
Thus $j_{ab}$ is such that $C_{j_{ab};a,b}=0$, but $C_{j_{ab}+1;a,b}$ need not be
zero.
By definition, $L^{-2j_{ab}}$ is bounded above and below by multiples of $|a-b|^{-2}$,
in fact $L^{j_{ab}} \le 2|a-b|$.

In \cite{BBS-rg-pt}, the flow of the coupling constants in $V$ is
computed at a perturbative level.  The perturbative flow is without
control of errors uniformly in the volume, and we address the uniform
control below.  The perturbative flow is determined by a map $V =
(g,\nu,z,\lambda_a,\lambda_b,q_a,q_b) \mapsto V_\pt =
(g_\pt,\nu_\pt,z_\pt,\lambda_{a,\pt},\lambda_{b,\pt},q_{a,\pt},q_{b,\pt})$;
here we are only interested in $\lambda,q$.
The perturbative flow of $\lambda, q$ is reported in
\cite[\eqref{pt-e:lambdapt2}--\eqref{pt-e:qpt2}]{BBS-rg-pt} as the
scale-dependent map $V \mapsto (\lambda_{\pt},q_{\pt})$ given, for
$x=a,b$, by
\begin{align}
  \label{e:lampt}
  \lambda_{x,\pt}
  &
  =
  \begin{cases}
    (1 - \delta[\nu w^{(1)}])\lambda_{x}  & (j+1 < j_{\pp\qq})
    \\
    \lambda & (j+1 \ge j_{\pp\qq}),
  \end{cases}
  \\
  \label{e:qpt}
  q_{x,\pt}
  &
  =
  q_{x}
  +
  \lambda_a \lambda_b
  \, C_{j+1;\pp,\qq}
.
\end{align}
In \refeq{lampt}--\refeq{qpt}, $j$ refers to the scale of the initial
$V$, with $(\lambda_\pt,q_\pt)$ being scale-$(j+1)$ objects.  Also,
$w^{(1)} = w_j^{(1)} = \sum_{x\in \Lambda} \sum_{i=1}^j
C_{i;0,x}$, and
\begin{equation}
    \delta[\nu w^{(1)}]
    =
    \nu^+ w_{j+1}^{(1)}
    -
    \nu w_{j}^{(1)}
    \quad
    \text{with
    $\nu^+=\nu + 2g C_{j+1;0,0}$.}
\end{equation}
The coalescence scale $j_{\pp \qq}$
has the property that $q_{\pt} = 0$ if $q=0$ for $j \le j_{\pp\qq}$ because the factor $C_{j+1;a,b}$
on the right-hand side of \eqref{e:qpt} is zero when $j+1 \le j_{\pp\qq}$.
The considerations that lead to the stopping of the flow of $\lambda$ at the coalescence
scale in \refeq{lampt}
are discussed in \cite[Section~\ref{pt-sec:loc-specs}]{BBS-rg-pt}.

As discussed above \eqref{e:IcircKdq}, it is convenient to express the
renormalisation group map in terms of $\delta q$ rather than $q$.  For
this, we identify elements $V \in \Qcal^{(0)}$ with elements of
$\Qcal^{(1)}$ having $q_a=q_b=0$, and, when $V \in \Qcal^{(0)}$ we
write $\delta q_\pt$ instead of $q_\pt$.

\subsection{A single renormalisation group step}

Now we consider the renormalisation group map
\begin{equation}
  \label{e:Fch-obs}
  (V,K) \mapsto (V_{+}^{(1)}, K_{+}) = (\delta q_{+},V_{+}, K_{+}),
\end{equation}
which pertains not only to the bulk, but also to the observable
coupling constants as well as $\pi_*K= (\pi_{a}K,\pi_{b}K,\pi_{ab}K)$.
To state the estimates we require from \cite{BS-rg-step} for the map
\refeq{Fch-obs}, we recall \eqref{e:domRGbulk}, and define similarly
\begin{equation} \label{e:domRG}
\begin{aligned}
  \domRG_j(\sgen)
  &= \{(g,\nu,z,\lambda_a,\lambda_b) \in \R^3 \times \C^2
  :
    C_{\DV}^{-1} \ggen <  g < C_{\DV} \ggen,
    \; L^{2j}|\nu|,|z| \le C_\DV \ggen,
   \\
   & \qquad\qquad\qquad  |\lambda_a|,|\lambda_b| \le C_\DV
    \}
    \times B_{\Wcal_j}(\DVa\chigen_j\ggen^3).
\end{aligned}
\end{equation}
The first factor is the same as \cite[\eqref{step-e:DV1}]{BS-rg-step},
but restricted to real values.  Compared to $\domRG^\varnothing$ of
\eqref{e:domRGbulk}, the coupling constants $\lambda_a,\lambda_b$ are
included in $\domRG$ of \eqref{e:domRG}.  Also, the Banach spaces
$\Wcal_j = \Wcal_j(\sgen)$ now pertain to $K$ with components in
$\Ncal^a, \Ncal^b, \Ncal^{ab}$; these spaces are discussed in detail
in
\cite[Sections~\ref{step-sec:coordinates}--\ref{step-sec:norms}]{BS-rg-step}.
The domain $\domRG^\varnothing$ is obtained by projecting both factors
in the definition \eqref{e:domRG} by the appropriate definitions of
$\pi_\varnothing$ on $\Qcal^{(0)}$ and $\Wcal_j$ separately.

A $j$-dependent norm on $\Qcal^{(1)}$ is defined by
\begin{equation} \label{e:Vnormdef}
  \|V\|_{\Qcal} = \max\{|g|, L^{2j}|\nu_j|, |z_j|,
  \ell_j\ell_{\sigma,j}
  |\lambda_a|, \ell_j\ell_{\sigma,j}|\lambda_b|, \ell_{\sigma,j}^2 |q_a|
  , \ell_{\sigma,j}^2 |q_b|\}
\end{equation}
where
\begin{equation}
\label{e:ellsigdef-bis}
    \ell_j = \ell_0 L^{-j},\quad
    \ell_{\sigma ,j}
=
    2^{(j-j_{ab})_+}L^{(j\wedge j_{\pp\qq})} \ggen
    .
\end{equation}
The significance of the weights $\ell_j$, $\ell_{\sigma,j}$ is explained in
\cite[Remark~\ref{IE-rk:hsigmot}]{BS-rg-IE}; the constant
$\ell_0>0$ is determined in \cite[\eqref{IE-e:CLbd}]{BS-rg-IE}
and is of no direct importance here.

The following theorem concerns a single renormalisation group step
\eqref{e:RGmap}, with observables.  It is a reduced version of the
main result of \cite{BS-rg-step}, combining the relevant parts of
\cite[Theorems~\ref{step-thm:mr-R}--\ref{step-thm:mr}, \ref{step-thm:Kmcont}]{BS-rg-step}
into a single statement.  Such a result was used
in \cite[Theorem~\ref{log-thm:step-mr-fv}]{BBS-saw4-log},
but now observables are included in $V$ and $K$.
In fact, only the observable part of the statement is of interest
here---the bulk flow is independent and has already been analysed---but
it is convenient to state the theorem in its general form,
applying to both bulk and observables simultaneously.
The bounds on derivatives provided by \cite{BS-rg-step} are not stated
in Theorem~\ref{thm:step-mr-fv} as they are not needed here.

The map $V_+^{(1)} = (\delta q_+, V_+)$ is a perturbation of the map $V_\pt$
discussed in Section~\ref{sec:pt}, and it is convenient to describe it
in terms of the difference
\begin{equation}
\lbeq{Rplusdef}
  R_+(V,K) = V_+^{(1)}(V,K) - V_\pt(V).
\end{equation}
Thus $R_+$ is an element of $\Qcal^{(1)}$ with a component for each
of the seven coupling constants $(g,\nu,z,\lambda_a,\lambda_b,\delta
q_a,\delta q_b)$, and $\delta q$ is defined by $\delta q = \frac 12
(\delta q_a + \delta q_b)$.  As in \cite{BBS-saw4-log}, considerable
care is required to express the continuity of the maps $R_+,K_+$ in
the mass parameter $m^2$, and we define the intervals
\begin{align}
\lbeq{massint}
    \Iint_j &= \begin{cases}
    [0,\delta] & (j<N)
    \\
    [\delta L^{-2(N-1)},\delta] & (j=N),
    \end{cases}
\end{align}
and, for $\mgen^2 \in \Iint_j$,
\begin{align}
\lbeq{Itilint}
    \Igen_j &= \Igen_j(\mgen^2) =
    \begin{cases}
    [\frac 12 \mgen^2, 2 \mgen^2] \cap \Iint_j & (\mgen^2 \neq 0)
    \\
    [0,L^{-2(j-1)}] \cap \Iint_j & (\mgen^2 =0).
    \end{cases}
\end{align}
For the statement of the theorem,
we write $\sgen = (\mgen^2,\ggen)$ and $\sgen_+ = (\mgen^2,\ggen_+)$.
We assume $\half \ggen_+ \leq \ggen \leq 2\ggen_+$
and write $\chigen = \chi_j(\mgen^2)$.
We subsequently use the explicit choice $\sgen = s_j$ and $\sgen_+ = s_{j+1}$,
discussed in Section~\ref{sec:bulk},
and the choice of $\DVa$ mentioned below \eqref{e:domRGbulk}.
Then in particular $\chigen = \chi_j$.

\begin{theorem} \label{thm:step-mr-fv}
  Let $d =4$.
  Let $C_\DV$ and $L$ be sufficiently large.
  There exist $M>0$ and $\delta >0$ such that
  for $\ggen \in (0,\delta)$ and $\mgen^2 \in \Iint_+$, and with the domain
  $\domRG$ defined using any $\DVa> M$, the maps
  \begin{equation}
  \lbeq{RKplusmaps}
    R_+:\domRG(\sgen) \times \Igen_+(\mgen^2) \to \Qcal^{(1)},
    \quad
    K_+:\domRG(\sgen) \times \Igen_+(\mgen^2) \to \Wcal_{+}(\sgen_+)
  \end{equation}
  are analytic in $(V,K)$,
  and satisfy the estimates
  \begin{equation}
    \label{e:RKplus}
    \|R_+\|_{\Qcal}
    \le
    M\chigen\ggen^{3}_{+}
    , \qquad
    \|K_+\|_{\Wcal_+}
    \le
    M\chigen \ggen^{3}_{+}
    .
  \end{equation}
  In addition, $R_+,K_+$ are jointly continuous in all arguments $m^{2}, V,K$.
\end{theorem}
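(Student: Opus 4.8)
\medskip
\noindent\textbf{Proof proposal.}
The plan is to obtain Theorem~\ref{thm:step-mr-fv} as a direct specialisation of the main results of \cite{BS-rg-step}, namely \cite[Theorems~\ref{step-thm:mr-R}--\ref{step-thm:mr}, \ref{step-thm:Kmcont}]{BS-rg-step}, which already treat the observable coupling constants $\lambda_a,\lambda_b,q_a,q_b$ and the components $\pi_*K$ that we need here. The first task is purely to line up conventions. I would check that the domain $\domRG(\sgen)$ of \eqref{e:domRG} is the stability domain \cite[\eqref{step-e:DV1}]{BS-rg-step} with the bulk scalars $g,\nu,z$ restricted to $\R$ and $\lambda_a,\lambda_b$ kept in $\C$, that the norm $\|\cdot\|_\Qcal$ of \eqref{e:Vnormdef} (with weights $\ell_j,\ell_{\sigma,j}$ of \eqref{e:ellsigdef-bis}) and the Banach space $\Wcal_j(\sgen)$ coincide with those of \cite{BS-rg-step}, and that under the convention $\sgen=s_j$, $\sgen_+=s_{j+1}$, $\ggen=\ggen_j$ fixed in Section~\ref{sec:bulk} the abstract parameters agree, so in particular $\chigen=\chi_j$. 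With $R_+$ defined by \refeq{Rplusdef}, which matches \cite[\eqref{step-e:Rplusdef}]{BS-rg-step}, the objects to be estimated are then identical to those of \cite{BS-rg-step}.

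Granting these identifications, I would transfer the three conclusions in turn. Analyticity of $R_+$ and $K_+$ in $(V,K)$ is inherited from \cite[Theorems~\ref{step-thm:mr-R}--\ref{step-thm:mr}]{BS-rg-step}: restricting an analytic map on the complex domain of \cite{BS-rg-step} to the real slice $g,\nu,z\in\R$ leaves it analytic in the genuinely complex variables $\lambda_a,\lambda_b$ and in $K\in\Wcal_j$. The estimates \eqref{e:RKplus} come from the corresponding bounds in \cite{BS-rg-step}; these are stated there with the factor $\chigen^{3/2}$, and since $0\le\chigen\le1$ gives $\chigen^{3/2}\le\chigen$, the weaker claim with $\chigen$ follows at once. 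The same replacement of $\chigen^{3/2}$ by $\chigen$ in the ball radius defining the second factor of $\domRG$ is the convention flagged below \eqref{e:domRGbulk}; since it only enlarges the domain, I would confirm---as was already done for the observable-free case in \cite{BBS-saw4-log}---that the estimates of \cite{BS-rg-step} remain valid on the domain as written here.

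The substantive step, and the one I expect to be the main obstacle, is the final clause: joint continuity of $R_+$ and $K_+$ in all arguments, and above all in the covariance mass $m^2$. I would derive this from \cite[Theorem~\ref{step-thm:Kmcont}]{BS-rg-step}, after verifying that the mass intervals $\Iint_j$ of \refeq{massint} and $\Igen_j(\mgen^2)$ of \refeq{Itilint} coincide with those used there. The delicacy is that the finite-range decomposition $C_j$, and hence the reference data $\sgen=(\mgen^2,\ggen)$ defining the $\Wcal_j$ norm, depend on $m^2$, and $C_j$ degenerates as $m^2\downarrow 0$; the interval $\Igen_j$ is precisely the device that confines the covariance mass $m^2$ to a neighbourhood of the fixed reference value $\mgen^2$, so that continuity can be asserted on one fixed Banach space. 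The hardest points are the endpoint $\mgen^2=0$, where \refeq{Itilint} reads $[0,L^{-2(j-1)}]\cap\Iint_j$, and the torus scale $j=N$, where $\Iint_N=[\delta L^{-2(N-1)},\delta]$; there one must check that the reference norm data at $m^2=0$ match ours and that the bounds hold uniformly up to this boundary. Once these interval and reference-data identifications are in place, the continuity statement of \cite[Theorem~\ref{step-thm:Kmcont}]{BS-rg-step} transfers verbatim, completing the proof.
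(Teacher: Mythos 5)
Your proposal matches the paper's treatment: Theorem~\ref{thm:step-mr-fv} is stated there as a reduced version of \cite[Theorems~\ref{step-thm:mr-R}--\ref{step-thm:mr}, \ref{step-thm:Kmcont}]{BS-rg-step}, with no further proof given beyond the citation and the remark that $\chigen^{3/2}$ has been replaced by $\chigen$ for notational convenience. Your elaboration of the identifications to be checked (domains, norms, the parameter convention $\sgen=s_j$, the mass intervals for continuity, and the harmlessness of weakening $\chigen^{3/2}\le\chigen$ in the bounds) is exactly the content implicit in that citation.
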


In a precise and non-trivial sense, Theorem~\ref{thm:step-mr-fv} shows
that the error to the perturbative calculation of Section~\ref{sec:pt}
is of third-order in the coupling constants.  However, unlike the bulk
coupling constants, which remain small, the observables are not small,
e.g., $\lambda_0=1$, and this is compensated by the weights in
\eqref{e:ellsigdef-bis}.

In the remainder of the paper, we write
\begin{equation}
  \text{$f\prec g$ when there is a $C>0$ such that $f \le Cg$;}
\end{equation}
the constant $C$ is always uniform in $g,\varepsilon$  and the
scale $j$ but may depend on $L$.

For $x=a,b$, let $R^{\lambda_x}_+$ denote the coupling constant
corresponding to $\lambda_{x}$ in $R_+$, and similarly for
$R^{q_x}_+$.  In \cite[Proposition~\ref{step-prop:Rcc}]{BS-rg-step},
it is shown that or $(V,K) \in \domRG_j$ and $x=a,b$,
\begin{align}
\lbeq{vlam}
   |R^{\lambda_{x}}_+|
&
\prec
    \chi_{j}  \ggen_{j}^{2} \1_{j< j_{\pp \qq}}
    ,
\\
\lbeq{vq}
    |R^{q_{x}}_+|
&
\prec
    |a-b|^{-2}
    \chi_{j}
   4^{-(j-j_{ab})} \ggen_{j}\1_{j \ge  j_{\pp \qq}}
.
\end{align}
The perturbative contribution $\lambda_{\pt,x}$ to the observable
coupling constant is independent of $x=a,b$, as is apparent from
\eqref{e:lampt}.  However, the paving of the torus $\Lambda$ by blocks
breaks translation invariance, and this allows $\lambda_x$ to have
non-perturbative contributions that depend on the relative positions
of $x=a,b$ within blocks.  Nevertheless, our main result
Theorem~\ref{thm:wsaw4} does not depend on the positions of $a,b$ in
the initial paving of $\Lambda$ by blocks.

\subsection{Observable flow}

The achievement of Theorem~\ref{thm:step-mr-fv} is to show that if
$(V_j,K_j)$ lies in the domain $\domRG_j$, then we have good control
of $\lambda_{x,j+1}, q_{x,j+1}$ and also the observable part of
$K_{j+1}$ (whose bulk part has been controlled along with the bulk
coupling constants already in Theorem~\ref{thm:bulkflow}).  The
following proposition links scales together via an inductive argument
to conclude that $(V_j,K_j)$ remains in $\domRG_j$ for \emph{all} $j
\le N$.

In particular, this requires that the bulk flow is well-defined for
all $j \leq N$.  For this, we recall that, given the parameters
$(m^2,g_0) \in [0,\delta)^2$, the critical initial conditions for the
global existence of the bulk renormalisation group flow are given by
\begin{equation} \lbeq{V0c}
  U_{0} = U_0^c = (g_0, z_0^c(m^2,g_0), \nu_0^c(m^2,g_0)),
\end{equation}
by Theorem~\ref{thm:bulkflow}.
We also recall the corresponding sequence $U_j(m^2,g_0)$.

According to \cite[\eqref{step-e:plusindep}]{BS-rg-step},
in the presence of observables, \refeq{bulk}
is supplemented by the statement that, for $x=a,b$,
\begin{equation}
\lbeq{plusindep}
\begin{aligned}
    &\text{if $\pi_x V=0$ and $\pi_x K(X)=0$ for all $X \in \Pcal$ then}
    \\
    &\text{$\pi_x V_+=\pi_{ab} V_+=0$ and $\pi_x K_+(U)
    =\pi_{ab} K_+(U)=0$ for all $U \in \Pcal_+$,}
\end{aligned}
\end{equation}
and, in addition, $\lambda_{a,+}$ is independent of each of
$\lambda_b$, $\pi_b K$, and $\pi_{ab}K$, and the same is true with
$a,b$ interchanged.

As a consequence, using Theorem~\ref{thm:step-mr-fv}, the next
proposition shows that the flow with observables, and with initial
conditions
\begin{equation} \label{e:V0ob}
  \pi_\varnothing V_0 = U_0^c,
  \quad \lambda_{x,0} \in \{0,1\}
  , \quad q_{x,0} = 0, \quad (x=a,b)
\end{equation}
exists for all $j \leq N$.  Note that we permit one or both of
$\lambda_{x,0}$ to equal zero, and in this case we regard the
observable at $x$ as being absent, so the concept of coalescence
becomes vacuous.  We therefore use the convention that
\begin{equation}
    j_{ab} = \infty
    \quad \text{if $\lambda_{a,0}=0$ or $\lambda_{b,0}=0$}.
\end{equation}

\begin{prop}
\label{prop:qlamN}
Let $\lambda_{x,0}\in \{0,1\}$ and $q_{x,0}=0$ for $x=a,b$.
\\
(i) For $(m^2,g_0) \in [\delta L^{-2(N-1)},\delta) \times (0,\delta)$,
there is a choice of $(q_{a,j},q_{b,j},V_j,K_j)$ such that
\refeq{Kspace-objective} holds for $0 \le j \le N$.
This choice is such that $\pi_\varnothing V_j = U_j(m^2,g_0)$.  If
$\lambda_{x,0}=0$ then $\lambda_{x,j}=0$ for all $0 \le j \le N$,
whereas if $\lambda_{x,0}=1$ then
\begin{align}
\lbeq{lam}
    \lambda_{x,j}
    &=
    \begin{cases}
    (1+\nu_jw_j^{(1)})^{-1}
    \Big(1 + \sum_{k=0}^{j-1} \vch_{\lambda_{x},k}\Big)
    & (j+1<j_{ab})
    \\
    \lambda_{j_{ab}-1} & (j+1 \ge j_{ab})
    .
    \end{cases}
\end{align}
If $\lambda_{x,0}=1$ for one or both of $x=a,b$ then
$q_{a,j}=q_{b,j}=0$ for all $0 \le j \le N$, whereas if
$\lambda_{a,0}=\lambda_{b,0}=1$ then, for $x=a,b$,
\begin{align}
\lbeq{q}
    q_{x,j}
    &=
    \sum_{i=j_{ab}-1}^{j-1}
    \left(
    \lambda_{a,j_{ab}-1}\lambda_{b,j_{ab}-1}
    \,C_{i+1;\pp,\qq}
    +
    v_{q_{x},i}
    \right)
    ,
    \\
\lbeq{K}
    \|K_j\|_{\Wcal_j}
    &\le
    M
    \chi_j \ggen_j^3.
\end{align}
In the above estimates, $M$ is the constant appearing in
\refeq{RKplus}, and $\vch_{\lambda_x,j},v_{q_x,j} \in \C$
obey, uniformly in $(m^2,g_0) \in [0,\delta)^2$,
\begin{equation}
\lbeq{vbds}
    |\vch_{\lambda_{x},j}| \prec \chi_j \ggen_j^2 \1_{j < j_{ab}},
    \quad\quad
    |v_{q_{x},j}| \prec
    |a-b|^{-2}
    \chi_j
   4^{-(j-j_{ab})} \ggen_{j} \1_{j \ge j_{\pp \qq}}.
\end{equation}
(ii)
  For $j \le N$,
  each of $\lambda_{x,j}, \delta q_{x,j}, q_{x,j}$ is independent of $N$,
  in the sense that, e.g., $q_{x,1},\ldots,q_{x,N}$ have the same values on
  $\Lambda_N$ as on a larger torus $\Lambda_{N'}$ with $N'>N$.
  In addition, each
  is defined as a continuous function of $(m^2,g_0) \in [0,\delta)^2$.
  Finally, for $j<j_{ab}$,
  $\lambda_{a,j}$ is independent of $\lambda_{b,0}$,
  and $\lambda_{b,j}$ is independent of $\lambda_{a,0}$.
\end{prop}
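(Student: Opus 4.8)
The plan is to argue by induction on the scale $j$, constructing the observable coupling constants $\lambda_{x,j},q_{x,j}$ and the observable part $\pi_* K_j$ by iterating the single renormalisation group step, while keeping $(V_j,K_j)$ inside the domain $\domRG_j$ so that Theorem~\ref{thm:step-mr-fv} applies at every scale. The bulk part requires no new work: by Theorem~\ref{thm:bulkflow}, for the initial condition $U_0 = U_0^c$ of \refeq{V0c} the bulk flow $(U_j,\pi_\varnothing K_j)$ exists and stays in $\domRG_j^\varnothing$ for all $j\le N$, so I set $\pi_\varnothing V_j = U_j(m^2,g_0)$ throughout. The triangular structure \eqref{e:bulk} and the decoupling \refeq{plusindep} are what make the extension clean: the equations for $\lambda_{x,j},q_{x,j},\pi_* K_j$ are driven by the already-known bulk flow and do not feed back into it, and the $a$-observables are insensitive to the $b$-data and vice versa.

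Writing $V_+^{(1)}(V,K) = V_\pt(V) + R_+(V,K)$ as in \refeq{Rplusdef}, the perturbative maps \eqref{e:lampt} and \eqref{e:qpt} give the scale-to-scale recursions
\[
\lambda_{x,j+1} = \bigl(1 - \delta[\nu_j w_j^{(1)}]\bigr)\lambda_{x,j} + R^{\lambda_x}_+(V_j,K_j)
\qquad (j+1 < j_{ab}),
\]
with $\lambda_{x,j}$ frozen at $\lambda_{x,j_{ab}-1}$ once $j+1\ge j_{ab}$, and
\[
q_{x,j+1} = q_{x,j} + \lambda_{a,j}\lambda_{b,j}\,C_{j+1;a,b} + R^{q_x}_+(V_j,K_j).
\]
Because $C_{j+1;a,b}=0$ and $R^{q_x}_+$ vanishes for $j+1<j_{ab}$, and $q_{x,0}=0$, the second recursion sums directly to the closed form \refeq{q} with $v_{q_x,i}=R^{q_x}_+(V_i,K_i)$. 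For $\lambda$ I solve the linear recursion by discrete variation of parameters: the homogeneous product $\prod_{k=0}^{j-1}(1-\delta[\nu_k w_k^{(1)}])$ telescopes, since $\delta[\nu w^{(1)}]$ is to leading order the increment of $\nu_j w_j^{(1)}$ and $w_0^{(1)}=0$, to $(1+\nu_j w_j^{(1)})^{-1}$ up to a remainder; this remainder, together with the source terms $R^{\lambda_x}_+$, is collected into the coefficients $\vch_{\lambda_x,k}$, yielding \refeq{lam}.

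For the estimates, the single-step bounds \refeq{vlam} and \refeq{vq} give \refeq{vbds}, and $\|K_{j+1}\|_{\Wcal_+}\le M\chi_j\ggen_j^3$ from \eqref{e:RKplus} is exactly \refeq{K}. To close the induction I verify the domain conditions at scale $j+1$: the $K$-bound places $\pi_* K_{j+1}$ in the ball $B_{\Wcal_+}(\DVa\chi_j\ggen_j^3)$ since $\DVa>M$, while \refeq{lam} gives $|\lambda_{x,j}|\le |(1+\nu_j w_j^{(1)})^{-1}|\,(1+\sum_{k<j}|\vch_{\lambda_x,k}|)$, where the prefactor is bounded because $\nu_j w_j^{(1)} = O(\chi_j\ggen_j)$ is small (using $\nu_j=O(\chi_j L^{-2j}\ggen_j)$ and $w_j^{(1)}=O(L^{2j})$) and $\sum_k|\vch_{\lambda_x,k}|\prec\sum_k\chi_k\ggen_k^2=O(g_0)$ by \refeq{chisum}. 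Hence $|\lambda_{x,j}|\le C_\DV$ for $\delta$ small, so $(V_{j+1},K_{j+1})\in\domRG_{j+1}$ and the induction reaches $j=N$, establishing \eqref{e:Kspace-objective}. Part~(ii) then follows by standard propagation: $N$-independence because the recursions involve only the $N$-independent bulk flow $U_j$ and covariance quantities ($C_{i;a,b}$, $w_i^{(1)}$) that agree across tori at the relevant scales by the finite-range decomposition; continuity in $(m^2,g_0)$ from the joint continuity of $R_+,K_+$ in Theorem~\ref{thm:step-mr-fv} and Theorem~\ref{thm:bulkflow}(ii); and the decoupling of $\lambda_{a,j}$ from $\lambda_{b,0}$ for $j<j_{ab}$ from the last clause of \refeq{plusindep}.

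I expect the main obstacle to be the telescoping that produces the explicit factor $(1+\nu_j w_j^{(1)})^{-1}$ in \refeq{lam}, together with the bookkeeping it entails: one must show that the discrepancy between the true homogeneous product and $(1+\nu_j w_j^{(1)})^{-1}$, once merged with the non-perturbative source $R^{\lambda_x}_+$, can be organised into error terms $\vch_{\lambda_x,k}$ still obeying the second-order bound $\prec\chi_k\ggen_k^2\1_{k<j_{ab}}$. By contrast, the domain maintenance itself is comparatively routine once the decoupling \refeq{plusindep} is invoked, since the observables never react back on the bulk and the bulk flow is already known to exist up to scale $N$.
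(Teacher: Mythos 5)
Your proposal follows essentially the same route as the paper: induction on the scale with the hypothesis that $(V_k,K_k)\in\domRG_k$ for all $k\le j$, the single-step bounds of Theorem~\ref{thm:step-mr-fv} to control $R_+^{\lambda_x}$, $R_+^{q_x}$ and $K_+$, direct summation of the $q$ recursion using $C_{i+1;a,b}=0$ below the coalescence scale, verification of $|\lambda_{j+1}|\le C_\DV$ via $\nu_jw_j^{(1)}=O(\chi_j\ggen_j)$ and \refeq{chisum} to close the induction, and Property~$\Z^d$, continuity of $R_+,K_+$, and \refeq{plusindep} for part~(ii). The one bookkeeping issue you flag as the main obstacle --- organising the telescoped homogeneous product into coefficients $\vch_{\lambda_x,k}$ that depend only on scale-$k$ data and obey the second-order bound --- is resolved in the paper by the exact change of variables $\lambdach_j=\lambda_j(1+\nu_jw_j^{(1)})$, which converts the $\lambda$ recursion into the purely additive form $\lambdach_{j+1}=\lambdach_j+\vch_{\lambda,j}$ with $\vch_{\lambda,j}$ given explicitly by \refeq{vchdef}, so no approximate telescoping is needed.
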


\begin{proof}
To simplify the notation, we drop the labels $x=a,b$ from $\lambda,q$
when their role is insignificant.

\smallskip\noindent
(i)
As a preliminary step, we introduce a change to variables that
diagonalise the evolution of $\lambda$ to linear order in $V$.  For
$(V_j,K_j)$, we write $\lambda_\pt = \lambda_\pt(V_j)$ and
$v_{\lambda,j}=R^\lambda_{j+1}(V_j,K_j)$.  Then the
$\lambda$-component of \eqref{e:Fch-obs} can be written as
$\lambda_{j+1}=\lambda_\pt + v_{\lambda,j}$.  We define
\begin{equation}
  \lambdach_j = \lambda_j(1+\nu_jw_j^{(1)}).
\end{equation}
By \eqref{e:lampt}, the recursion for $\lambdach_j$ can then be written as
\begin{equation}
\lbeq{lambdachrec}
    \lambdach_{j+1} = \lambdach_j + \vch_{\lambda,j}
\end{equation}
with
\begin{equation}
\lbeq{vchdef}
    \vch_{\lambda,j}
    =
    (\nu_{j+1}-\nu_j^+)\lambda_j w_{j+1}^{(1)} +
    v_{\lambda ,j}(1+ \nu_{j+1} w_{j+1}^{(1)})
  - \delta_j[\nu w^{(1)}]\lambda_j  \nu_{j+1} w_{j+1}^{(1)}
  .
\end{equation}
The solution to \refeq{lambdachrec} with initial condition $\lambda_0=1$
is
$\lambdach_j = 1 + \sum_{k=0}^{j-1}\vch_{\lambda,k}$,
and hence
\begin{align}
    \label{e:lamrec}
    \lambda_{j}
    &
    =
    (1+\nu_jw_j^{(1)})^{-1}
    \Big(1 + \sum_{k=0}^{j-1} \vch_{\lambda,k}\Big)
    .
\end{align}
By \eqref{e:qpt} and \refeq{Rplusdef}, and with $v_{q,j} =
R^q_{j+1}(V_j,K_j)$, $\delta q_j$ is simply given by
\begin{equation} \label{e:obs-flow-q}
  \delta q_{j+1} = \delta q_\pt + v_{q,j} =
  \lambda_{a,j}\lambda_{b,j}
  C_{j+1;a,b} + v_{q,j}.
\end{equation}

Now we can prove \refeq{lam}--\refeq{vbds} by induction on $j$, with
induction hypothesis:
\begin{align*}
    {\rm IH}_j:  \hspace{3mm} &
    \text{for all $k \le j$,
    $(V_k,K_k) \in \domRG_k$,
    \refeq{lam}--\refeq{vbds} hold with $j$ replaced by $k$.}
\end{align*}
By direct verification, ${\rm IH}_0$ holds
(with $\vch_{\lambda,-1}=v_{q,-1}=0$).

We assume ${\rm IH}_j$ and show that it implies ${\rm IH}_{j+1}$. By
${\rm IH}_j$ and the bound \refeq{RKplus} of
Theorem~\ref{thm:step-mr-fv}, $K_{j+1}$ obeys \refeq{K}.  In
particular, this estimate implies $K_{j+1} \in B_{\Wcal_j}(\DVa \chi_j
\ggen_j)$.

By \eqref{e:bulk}--\eqref{e:RGmapbulk}, $\pi_\varnothing V_{j} =
U_{j}$ for all $j$, and by Theorem~\ref{thm:bulkflow}, $U$ satisfies
the bounds required for $\pi_\varnothing V$ in the definiton of
$\domRG$. Therefore, to verify $(V_{j+1},K_{j+1}) \in \domRG_{j+1}$,
it suffices to show $|\lambda_{j+1}| \leq C_\Dcal$.

By \refeq{Rplusdef}, \refeq{lampt}, and \refeq{vlam}, $\lambda_j =
\lambda_{j_{ab}-1}$ for all $j \ge j_{ab}$, so we assume that
$j<j_{ab}-1$.  To estimate $\vch_{\lambda,j}$, we use the fact that
$|\lambda_j| \le C_\DV$ by assumption, and $|\nu_j| \prec
L^{-2j}\chi_j \ggen_j$ by Theorem~\ref{thm:bulkflow}.  We apply
\cite[Lemma~\ref{pt-lem:wlims}]{BBS-rg-pt} and
\cite[\eqref{step-e:factnu}]{BS-rg-step} to see that $w_j^{(1)} \prec
L^{2j}$ and $|\nu_{j+1}-\nu_j^+| \prec L^{-2j}\chi_j \ggen_j^2$, and
also $ |\nu_j|w_j^{(1)} \prec L^{-2j} \chi_j \ggen_j w_j^{(1)} \prec
\chi_j \ggen_j$.  The factor $v_{\lambda,j}$ is bounded via
\refeq{vlam}, and the last term on the right-hand side of
\refeq{vchdef} is similarly bounded (without any need for cancellation
in the $\delta$ term).  We conclude that
$|\vch_{\lambda,j}|\prec\chi_j\ggen_{j}^{2}$, as required.  With
\refeq{chisum}, this leads to $|\lambda_{j+1}| = 1+O(g_0) \le C_{\DV}$
(since we have assumed above \refeq{domRGbulk} that $C_{\DV}\ge 2$).
This establishes that $\lambda_{j+1}$ obeys the condition required in
the definition of the domain $\domRG_{j+1}$, and all necessary
properties for $\lambda_{j+1}$ have been established.

By \eqref{e:obs-flow-q} and \eqref{e:vq} with ${\rm IH}_j$, \refeq{q}
holds with $v_{q,j}$ obeying \refeq{vbds}.  This advances the
induction and completes the proof of part~(i).

\smallskip \noindent
(ii)
The $N$-independence of $\lambda_j,\delta q_j$ follows exactly as in
the proof of \cite[Proposition~\ref{log-prop:KjNbd}]{BBS-saw4-log}, so
we only sketch the argument.  By \eqref{e:lampt}--\eqref{e:qpt},
$\lambda_\pt$ and $\delta q_\pt$ are independent of $N$.  Moreover, by
\cite[Proposition~\ref{step-prop:VZd}(i)]{BS-rg-step}, $R_+(V,K)$ is
independent of $N$ provided that $V$ is independent of $N$ and that
the family $K$ has Property~$\Z^d$ defined in \cite{BS-rg-step}.  That
the renormalisation group map preserves Property~$\Z^d$ for $K$ is
shown in \cite[Proposition~\ref{step-prop:KplusZd}]{BS-rg-step}.

To show that $\lambda_j, \delta q_j$ (and thus also $q_j$) are
continuous as functions of $(m^2,g_0) \in [0,\delta)^2$, assuming that
$V_0 = V_0^c(m^2,g_0)$, we can proceed exactly as in
\cite[Section~\ref{log-sec:Fcont}]{BBS-saw4-log}.  The definition of
\emph{continuous functions of the renormalisation group coordinates at
scale-$j$}, provided by
\cite[Definition~\ref{log-def:flowcont}]{BBS-saw4-log} for the bulk
coordinates, applies literally also to the renormalisation group
coordinates with observables. By Theorem~\ref{thm:step-mr-fv} for
$R_+$ and \cite[Lemma~\ref{pt-lem:wlims}]{BBS-rg-pt} for $V_\pt$, both
of $\lambda_+,\delta q_+$ are continuous functions of the
renormalisation group coordinates at scale-$j$.  By
\cite[Proposition~\ref{log-prop:Fcont}]{BBS-saw4-log}, which also
applies literally with observables, we conclude continuity of
$\lambda_j,\delta q_j$ for all $j$.

Finally, it follows inductively from \refeq{plusindep} and the
statement below \refeq{plusindep} that if $j<j_{ab}$ then
$\lambda_{a,j}$ is independent of $\lambda_{b,0}$, and vice versa, as
required.  This completes the proof.
\end{proof}

In the following lemma, we denote the derivative of
$Z_N^0(\phi,\phib)$ with respect to $\phib$, in the direction of a
test function $J:\Lambda \to \C$, as $D_{\phib} Z_N^0(\phi,\phib;J)=
\frac{d}{dt}Z_N^0(\phi,\phib+tJ)|_0$.  Let $1$ denote the constant
test function $1_x=1$ for all $x \in \Lambda$.  We systematically use
subscripts $\sigma$ or $\sigma\sigmab$ to denote the coefficient of
$\sigma$ or $\sigma\sigmab$ in $F\in \Ncal$, under the decomposition
\refeq{Ncaldecomp}.  For example, we write
$K_{N;\sigma\sigmab}(\Lambda)=
\frac{1}{\sigma\sigmab}\pi_{ab}K_N(\Lambda)$.

\begin{lemma}
\label{lem:lamid}
Let $\lambda_{x,0}\in \{0,1\}$ and $q_{x,0}=0$ for $x=a,b$.
The flow of Proposition~\ref{prop:qlamN} obeys
\begin{equation}
    \lambda_{a,N} =
    D_{\phib} Z_{N;\sigma}^{0}(0,0; 1)
    - D_{\phib}W_{N;\sigma}^{0}(\Lambda; 0,0; 1)
    - D_{\phib}K_{N;\sigma}^{0}(\Lambda; 0,0; 1).
\end{equation}
\end{lemma}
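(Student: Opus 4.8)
The plan is to start from the final-scale representation \eqref{e:ZNcirc}, namely $Z_N = e^{q_N\sigma\sigmab}(I_N(\Lambda)+K_N(\Lambda))$, and to read off from it the coefficient of $\sigma$ in degree zero. Since the prefactor $e^{q_N\sigma\sigmab}$ contributes only to the $\varnothing$ and $\sigma\sigmab$ components of the quotient $\Ncal$ under the decomposition \eqref{e:Ncaldecomp}, it does not affect the $\sigma$-component, so that
\begin{equation*}
  Z_{N;\sigma}^0 = I_{N;\sigma}^0 + K_{N;\sigma}^0 .
\end{equation*}
Applying $D_{\phib}(\,\cdot\,;1)$ at $(0,0)$ and using linearity, the contribution of $K_{N;\sigma}^0$ is exactly the term $D_{\phib}K_{N;\sigma}^0(\Lambda;0,0;1)$ subtracted on the right-hand side. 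Hence the whole identity reduces to the claim
\begin{equation*}
  D_{\phib}I_{N;\sigma}^0(0,0;1) = \lambda_{a,N} + D_{\phib}W_{N;\sigma}^0(\Lambda;0,0;1).
\end{equation*}

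Next I would expand $I_N(\Lambda)=e^{-V_N(\Lambda)}(1+W_N(V_N,\Lambda))$ using \eqref{e:Idef} (with the single block $\Lambda$ at scale $N$). By \eqref{e:Vterms}, summing the observable terms over $\Lambda$ gives $V_N(\Lambda) = U_N(\Lambda) - \lambda_{a,N}\sigma\phib_a - \lambda_{b,N}\sigmab\phi_b$ with $U_N=\pi_\varnothing V_N$, so in the quotient $\Ncal$ the factor $e^{-V_N(\Lambda)}$ has $\varnothing$-component $e^{-U_N(\Lambda)}$ and $\sigma$-component $e^{-U_N(\Lambda)}\lambda_{a,N}\phib_a$. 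Multiplying by $1+W_N$ and collecting the coefficient of $\sigma$ (recalling that a product of two first-order terms is second order, hence absent from the $\sigma$-component),
\begin{equation*}
  I_{N;\sigma} = e^{-U_N(\Lambda)}\lambda_{a,N}\phib_a\,(1+W_{N;\varnothing}) + e^{-U_N(\Lambda)}W_{N;\sigma}.
\end{equation*}

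I then take the degree-zero part, differentiate in $\phib$ along the constant test function $1$, and evaluate at $\phi=\phib=0$. The two facts that drive the computation are: every bulk monomial of $U_N$ (built from $\tau_x$, $\tau_{\Delta,x}$, $\tau_x^2$) carries a factor of $\phi$, so that $U_N^0$ vanishes identically on the line $\phi=0$ and consequently $e^{-U_N^0}|_0=1$ and $D_{\phib}e^{-U_N^0}(0,0;1)=0$; and $D_{\phib}\phib_a(0,0;1)=1$. The product rule then annihilates every term except the one in which $D_{\phib}$ lands on $\phib_a$, and using $\phib_a|_0=0$ in the remaining products, one obtains
\begin{equation*}
  D_{\phib}I_{N;\sigma}^0(0,0;1) = \lambda_{a,N}\bigl(1+W_{N;\varnothing}^0(\Lambda;0,0)\bigr) + D_{\phib}W_{N;\sigma}^0(\Lambda;0,0;1).
\end{equation*}

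The step I expect to be the crux is verifying that the normalisation factor $1+W_{N;\varnothing}^0(\Lambda;0,0)$ is exactly $1$, i.e.\ that $W_{N;\varnothing}^0(\Lambda;0,0)=0$. This says that the bulk part of $W$ carries no degree-zero, field-independent term, which I would extract from the explicit second-order definition of $W$ in \cite{BBS-rg-pt}; it ultimately reflects the supersymmetry of the model, through the vanishing vacuum values $\Ex_C\tau_x=\Ex_C\tau_x^2=\Ex_C\tau_{\Delta,x}=0$ and the associated localisation identity $\Ex_C e^{-U(\Lambda)}=1$ \cite{BIS09}. Granting this, the factor collapses to $1$, the $W$-terms in the two displays match, and combining them with the reduction of the first paragraph yields the asserted identity for $\lambda_{a,N}$.
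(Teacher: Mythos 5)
Your argument is correct and follows essentially the same route as the paper's proof: expand $I_N(\Lambda)=e^{-V_N(\Lambda)}(1+W_N(\Lambda))$, project onto the $\sigma$-component via $\pi_a(FG)=(\pi_aF)(\pi_\varnothing G)+(\pi_\varnothing F)(\pi_a G)$, and differentiate at $(\phi,\phib)=(0,0)$ (your explicit remark that $e^{q_N\sigma\sigmab}$ does not affect the $\Ncal^a$ component is a point the paper passes over silently). The one step you flag as the crux, $W_{N;\varnothing}^{0}(\Lambda;0,0)=0$, has a more elementary justification than an appeal to supersymmetry: $W_N^{0,\varnothing}$ is by construction a polynomial in the fields with no monomials of degree below two, so it vanishes at $\phi=\phib=0$.
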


\begin{proof}
As in \cite[\eqref{log-e:chibarm-bisIK}]{BBS-saw4-log},
\begin{equation}
  Z_N^0 = I_N^0(\Lambda) + K_N^0(\Lambda)
  = e^{-V_N^0(\Lambda)} (1+W_N^0(\Lambda)) + K_N^0(\Lambda).
\end{equation}
Therefore, since $\pi_a(FG) = (\pi_aF)(\pi_\varnothing G) + (\pi_\varnothing F)(\pi_a G)$,
and since
\begin{equation}
  \pi_\varnothing (e^{-V_N^0(\Lambda)}) = e^{-U_N^0(\Lambda)},
  \quad \pi_a (e^{-V_N^0(\Lambda)}) = \sigma\lambda_{a,N} \phib_a,
\end{equation}
we obtain
\begin{align}
  Z_{N;\sigma}^0
  &= \lambda_{a,N}\phib_a e^{-U_N^0(\Lambda)} (1+W_N^{0,\varnothing}(\Lambda))
  + e^{-U_N^0(\Lambda)}  W_{N;\sigma}^{0}(\Lambda) + K_{N;\sigma}^{0}(\Lambda).
\end{align}
Differentiating with respect to $\phib$ at $(\phi,\phib)=(0,0)$, we
obtain
\begin{equation}
    D_{\phib} Z_{N;\sigma}^{0}(0,0; 1) =
    \lambda_{a,N}
    +D_{\phib}W_{N;\sigma}^{0}(\Lambda; 0,0; 1)
    + D_{\phib}K_{N;\sigma}^{0}(\Lambda; 0,0; 1),
\end{equation}
where we used $e^{-U_N^0(\Lambda; 0,0)} = 1$ and the fact that
$W_N^{0,\varnothing}(\Lambda; 0,0) = 0$ since $W_N^{0,\varnothing}$ is
a polynomial in $\phi$ with no monomials of degree below two.
\end{proof}

The $W$ and $K$ terms in the statement of Lemma~\ref{lem:lamid} are
estimated using the following lemma.

\begin{lemma}
\label{lem:WK} The following bounds hold uniformly in $m^2 \in [\delta
L^{-2(N-1)},\delta)$:
\begin{align}
\lbeq{Kg1}
    \left|
    K_{N;\sigma\bar\sigma}^0(\Lambda; 0,0)
    \right|
    &\prec
    \frac{1}{4^{(N-j_{ab})_+}}\frac{1}{|a-b|^{2}}
    \chi_N \gbar_N,
\\
\lbeq{DKbd}
  |D_{\phib}K_{N;\sigma}^{0}(\Lambda;0,0;1)|
  &\prec \chi_N \gbar_N^2
  \left(\frac{L}{2}\right)^{(N-j_{ab})_+},
\\
\label{e:WNbdz}
  |D_{\phib} W_{N;\sigma}^0(\Lambda; 0, 0;  1)|
  &\prec \chi_N \bar g_N
  \left(\frac{L}{2}\right)^{(N-j_{ab})_+}
  .
\end{align}
\end{lemma}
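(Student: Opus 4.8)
The plan is to derive all three bounds from the single remainder estimate $\|K_N\|_{\Wcal_N}\le M\chi_N\gbar_N^3$ of Proposition~\ref{prop:qlamN}(i) (namely \refeq{K}, which applies precisely in the regime $m^2\in[\delta L^{-2(N-1)},\delta)$ of the lemma, with $\lambda_{a,0}=\lambda_{b,0}=1$) for the two $K$-bounds, and from the explicit quadratic form of $W_N$ for \eqref{e:WNbdz}, combined in each case with the definition of the weights $\ell_N,\ell_{\sigma,N}$ in \refeq{ellsigdef-bis}. The governing principle is that the $\Wcal_N$ norm dominates the pointwise and directional-derivative evaluations at $(\phi,\phib)=(0,0)$ of the individual sectors $\pi_aK_N$ and $\pi_{ab}K_N$, once one divides by the reciprocal of the appropriate weight; this is the observable analogue of the bulk estimate \cite[\eqref{step-e:Kg1}]{BS-rg-step} already invoked in \refeq{Kgnull}. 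Everything is then driven by two elementary weight computations: writing $\ggen=\ggen_N$, using $L^{j_{ab}}\le 2|a-b|$ and \refeq{ellsigdef-bis}, one has
\[
  \ell_{\sigma,N}^{-2}\prec 4^{-(N-j_{ab})_+}|a-b|^{-2}\ggen^{-2},
  \qquad
  \ell_N^{-1}\ell_{\sigma,N}^{-1}=\ell_0^{-1}\ggen^{-1}(L/2)^{(N-j_{ab})_+},
\]
where the first uses $L^{-2(N\wedge j_{ab})}=L^{-2j_{ab}}\asymp|a-b|^{-2}$ in the relevant regime $N\ge j_{ab}$; finally $\ggen_N=\gbar_N+O(\gbar_N^2)$ by \refeq{gbarggen}, so $\ggen_N\asymp\gbar_N$.

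For \refeq{Kg1}, I would observe that $K_{N;\sigma\sigmab}^0(\Lambda;0,0)$ is the scalar obtained by extracting the coefficient of $\sigma\sigmab$, setting the fermion field to zero, and evaluating at $\phi=\phib=0$. Since the $\sigma\sigmab$ sector carries weight $\ell_{\sigma,N}^2$ in $\Wcal_N$ and the evaluation uses no boson derivative, $|K_{N;\sigma\sigmab}^0(\Lambda;0,0)|\le\ell_{\sigma,N}^{-2}\|K_N\|_{\Wcal_N}$. Inserting the first weight estimate together with $\|K_N\|_{\Wcal_N}\prec\chi_N\gbar_N^3$ and $\ggen_N\asymp\gbar_N$ yields exactly \refeq{Kg1}. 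When $N<j_{ab}$ the observables at $a$ and $b$ have not yet coalesced, so $\pi_{ab}K_N(\Lambda)=0$ by the finite-range propagation encoded in \refeq{plusindep} (the same mechanism forcing $q_{x,j}=0$ for $j<j_{ab}$ in Proposition~\ref{prop:qlamN}), and the bound holds trivially; this explains the $(\cdot)_+$.

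For \refeq{DKbd}, I would extract the $\sigma$-sector $\pi_aK_N$, which carries weight $\ell_{\sigma,N}$, and then take one $\phib$-derivative along the constant test function $1$, using $|D_{\phib}K_{N;\sigma}^0(\Lambda;0,0;1)|\le\|1\|_{\Phi}\,\ell_{\sigma,N}^{-1}\|K_N\|_{\Wcal_N}$. Here $\|1\|_{\Phi}\asymp\ell_N^{-1}$: the boson test-function norm is of weighted sup type, so the constant function has norm $\asymp\ell_N^{-1}$, while the summability over the block of the linear-in-$\phib$ coefficients (the $\ell^1$ dual that would otherwise threaten a volume factor) is precisely what the $\Wcal_N$ norm encodes. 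The second weight identity $\ell_N^{-1}\ell_{\sigma,N}^{-1}\asymp\ggen_N^{-1}(L/2)^{(N-j_{ab})_+}$ then gives \refeq{DKbd}.

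The estimate \eqref{e:WNbdz} shares the structure of \refeq{DKbd}---it is again $\ell_N^{-1}\ell_{\sigma,N}^{-1}$ times a bound on the $\sigma$-sector---but now for $W_N$ rather than the remainder $K_N$, and this is the step I expect to be the main obstacle. The clean $\gbar_N^3$ norm bound is unavailable, so I would instead use the explicit quadratic dependence of $W_N$ on $V_N$ from \cite{BBS-rg-pt}. The $\sigma$-sector of $W_N$ is linear in the observable coupling $\lambda_{x,N}$ and first order in the bulk couplings, so combining $|\lambda_{x,N}|\prec 1$ from Proposition~\ref{prop:qlamN}, the bulk bounds $g_N,z_N=O(\gbar_N)$ and $\nu_N=O(L^{-2N}\gbar_N)$ from Theorem~\ref{thm:bulkflow}, and the covariance-sum estimate $w_N^{(1)}\prec L^{2N}$ of \cite[Lemma~\ref{pt-lem:wlims}]{BBS-rg-pt} (so that e.g.\ $\nu_Nw_N^{(1)}\prec\chi_N\gbar_N$), I expect the linear-in-$\phib$ coefficient of $W_{N;\sigma}^0$ to be $\prec\chi_N\gbar_N$---one power of $\gbar_N$ lower than for $K_N$, as befits the leading perturbative term. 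The $\phib$-derivative along $1$ then contributes the same factor $\ell_N^{-1}\ell_{\sigma,N}^{-1}\asymp\ggen_N^{-1}(L/2)^{(N-j_{ab})_+}$, producing \eqref{e:WNbdz}. The delicate part will be verifying that the $W$-contractions genuinely yield only first-order contributions in the $\sigma$-sector with the asserted covariance decay, and that the weight bookkeeping matches $\ell_{\sigma,N}$ exactly; the bulk analogue is routine in \cite{BBS-rg-pt}, and the observable sector requires only tracking the single anchored point $a$.
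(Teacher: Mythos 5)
Your treatment of \refeq{Kg1} and \refeq{DKbd} is essentially the paper's own argument: extract the $\sigma\sigmab$- (resp.\ $\sigma$-) sector, use the fact that each occurrence of $\sigma$ carries the weight $\ell_{\sigma,N}$ of \refeq{ellsigdef-bis} in the $T_{0,N}(\ell_N)$ norm, pair the derivative with $\|1\|_{\Phi_N(\ell_N)}=\ell_N^{-1}$, and feed in $\|K_N\|_{\Wcal_N}\prec\chi_N\gbar_N^3$ from \refeq{K} together with $L^{j_{ab}}\asymp|a-b|$ and \refeq{gbarggen}; your two weight computations are exactly the ones the paper performs. Where you diverge is \eqref{e:WNbdz}: the paper does not re-derive the $\sigma$-sector of $W_N$ from its explicit quadratic structure, but simply cites the ready-made norm estimate $\|W_N(\Lambda)\|_{T_{0,N}}\prec\chi_N g_N^2$ from \cite[Proposition~\ref{IE-prop:Wnorms}]{BS-rg-IE} and then repeats verbatim the weight argument used for \refeq{DKbd}, so that the single factor $\ell_{\sigma,N}^{-1}\prec \ggen_N^{-1}2^{-(N-j_{ab})_+}L^{-(N\wedge j_{ab})}$ converts $g_N^2$ into the one power of $\gbar_N$ in \eqref{e:WNbdz}. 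Your alternative is workable in principle, but as written its power counting is inconsistent: if the linear-in-$\phib$ coefficient of $W_{N;\sigma}^0$ (i.e.\ the derivative itself) were $\prec\chi_N\gbar_N$, there would be nothing left to multiply by $\ell_N^{-1}\ell_{\sigma,N}^{-1}$, and if you do multiply a quantity of size $\chi_N\gbar_N$ by $\ggen_N^{-1}(L/2)^{(N-j_{ab})_+}$ you obtain $\chi_N(L/2)^{(N-j_{ab})_+}$, which is weaker than \eqref{e:WNbdz} by a factor $\gbar_N^{-1}$. The consistent statement, and the one the paper uses, is that the \emph{norm} $\|W_N(\Lambda)\|_{T_{0,N}}$ is $O(\chi_N\gbar_N^2)$ (one power below $\|K_N\|_{\Wcal_N}$), after which the weight factor produces \eqref{e:WNbdz}. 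Your side remark that $\pi_{ab}K_N(\Lambda)=0$ when $N<j_{ab}$ is a sensible supplement (the paper's inequality chain is written for the regime $N\ge j_{ab}$), though not needed for the main application.
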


\begin{proof}
Recall the definitions of the $\Qcal$ norm from \refeq{Vnormdef}, and
the definitions of the $T_{0,j}(\ell_j)$ and $\Phi_j(\ell_j)$ norms
from \cite[Section~\ref{log-sec:flow-norms}]{BBS-saw4-log}.  By
definition of the $T_{0,j}(\ell_j)$ norm,
\begin{equation}
    |K_{N;\sigma\bar\sigma}^0(\Lambda; 0,0)| \le \|K_N(\Lambda)\|_{T_{0,N}(\ell_N)},
\end{equation}
and,
for any $F\in \Ncal^\varnothing$ and any test function $J:\Lambda \to \C$,
\begin{equation}
\lbeq{DphibF}
  |D_{\phib} F^0(0, 0; J)|
  \leq   \|F\|_{T_{0,N}(\ell_N)} \|J\|_{\Phi_N(\ell_N)}.
\end{equation}

Recall from \cite[\eqref{loc-e:Fnormsum}]{BS-rg-loc} that in the $T_0$ norm
each occurrence of $\sigma$ or $\sigmab$ gives rise to a weight
\begin{equation}
\label{e:ellsigdef}
    \ell_{\sigma ,j}
=
    2^{(j-j_{ab})_+}L^{(j\wedge j_{\pp\qq})} \ggen_{j}
.
\end{equation}
There is therefore a factor $\ell_{\sigma,j}^2$ inside the norm of
$\pi_{ab}K_j$.  We apply \cite[\eqref{step-e:KWbd}]{BS-rg-step}, which
uses this fact, and which implies that the bound
\begin{equation}
\lbeq{KWbd}
    |K_{N;\sigma\sigmab}^0(0,0)|
    \le
    \ell_{\sigma ,j}^{-2}
    \|K_N(\Lambda)\|_{T_{0,N}(\ell_N)}
    \le
    \ell_{\sigma ,j}^{-2}
    \|K_N\|_{\Wcal_N}
    \prec
    4^{-(N-j_{ab})_+}L^{-2j_{\pp\qq}} \chi_N \ggen_{N}
\end{equation}
holds uniformly in $m^2 \in [\delta L^{-2(N-1)},\delta)$.  As
mentioned below Definition~\ref{def:jab}, $L^{2j_{ab}}$ and $|a-b|$
are comparable.  With \refeq{K} and \refeq{gbarggen}, this shows that
that \refeq{Kg1} holds.

By definition, $\|1\|_{\Phi_N(\ell_N)}=\ell_N^{-1}$ (see
\cite[\eqref{log-e:1norm}]{BBS-saw4-log}).  With \refeq{DphibF}, this
gives
\begin{equation}
\lbeq{DphibK}
\begin{aligned}
  |D_{\phib}K_{N;\sigma}^{0}(\Lambda;0,0;1)|
  &\leq
  \ell_{\sigma,N}^{-1} \|K_N(\Lambda)\|_{T_{0,N}(\ell_N)} \|1\|_{\Phi_N(\ell_N)}
  =
  \ell_{\sigma,N}^{-1}\ell_N^{-1} \|K_N\|_{\Wcal_N}
  .
\end{aligned}
\end{equation}
With \refeq{ellsigdef-bis} and \refeq{K}, this proves \refeq{DKbd}.
Finally, by \cite[Proposition~\ref{IE-prop:Wnorms}]{BS-rg-IE},
\begin{equation} \label{e:Wbilinbd}
  \|W_{N}(\Lambda)\|_{T_{0,N}} \prec \chi_N g_N^2,
\end{equation}
and \refeq{WNbdz} then follows as in \refeq{DphibK}.
\end{proof}

The next two lemmas apply Proposition~\ref{prop:qlamN} to study limits
of the sequences $\lambda_{x,j},q_{x,j}$.  By
Proposition~\ref{prop:qlamN}(ii), $q_{x,j}$
is independent of $N$ (assuming that $N$ is larger than $j_{\pp\qq}$),
and $\lambda_{x,j}$ is independent of $j_{\pp\qq}$ and $N$ if
$j<j_{ab}\leq N$, and we can therefore define sequences
$\lambda_{x,j}^*$ for all $j \in \N_0$, with $\lambda^*_{x,0}=1$, such
that $\lambda_{x,j} = \lambda_{x,j}^*$ for $j < j_{ab}$.  The sequence
$\lambda_{a,j}^*$ is independent of $\lambda_{b,0}$, and vice versa.
By definition,
\begin{equation}
    \lambda_{x,j} = \lambda^*_{x,j\wedge (j_{ab}-1)},
\end{equation}
and
\begin{equation}
    \lambda_{a,j} = \lambda_{a,j}^* \quad \text{for all $j \le N$ when $\lambda_{b,0}=0$.}
\end{equation}
We make the dependence on $(m^2,g_0)$ explicit by writing
$\lambda_{x,j} = \lambda_{x,j}(m^2,g_0)$ and $q_{x,j} =
q_{x,j}(m^2,g_0)$.

\begin{lemma}
\label{lem:lamlim}
For $(m^2,g_0) \in [0,\delta)^2$, for $x=a$ or $x=b$, for
$\lambda_{x,0}=1$, and for $j \in \N_0$,
\begin{equation}
    |1-\lambda^*_j (m^2,g_0)| \prec  \chi_{j}\gbar_{j}.
\end{equation}
In particular,
\begin{equation}
    |1 - \lambda_{x,j_{ab}-1}(m^2,g_0)|
    \prec \chi_{j_{ab}}\gbar_{j_{ab}}.
\end{equation}
\end{lemma}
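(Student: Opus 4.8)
The plan is to work directly from the explicit solution of the $\lambda$-flow recorded in Proposition~\ref{prop:qlamN}(i). For $j<j_{ab}$ one has $\lambda_{x,j}=\lambda^*_j$, and the first branch of \refeq{lam} gives, with $\lambdach_j=1+\sum_{k=0}^{j-1}\vch_{\lambda,k}$, the representation $\lambda^*_j=(1+\nu_jw_j^{(1)})^{-1}\lambdach_j$. (Here $\lambda^*_j$ is the value obtained by letting $j_{ab}\to\infty$, so every increment obeys the unrestricted size bound $|\vch_{\lambda,k}|\prec\chi_k\gbar_k^2$ coming from \refeq{vbds}.) I would then estimate $1-\lambda^*_j$ by controlling the scalar prefactor $(1+\nu_jw_j^{(1)})^{-1}$ and the deviation $1-\lambdach_j$ separately.

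For the prefactor, Theorem~\ref{thm:bulkflow} gives $|\nu_j|\prec L^{-2j}\chi_j\gbar_j$, while $w_j^{(1)}\prec L^{2j}$ by \cite[Lemma~\ref{pt-lem:wlims}]{BBS-rg-pt}; hence $|\nu_jw_j^{(1)}|\prec\chi_j\gbar_j$, which is small by \refeq{chiglim}. Thus $(1+\nu_jw_j^{(1)})^{-1}$ is bounded and equals $1+O(\chi_j\gbar_j)$, so it suffices to show $|1-\lambdach_j|\prec\chi_j\gbar_j$. The essential point is that \refeq{chisum} controls a \emph{tail} rather than a partial sum, so I would \emph{not} estimate $\sum_{k=0}^{j-1}|\vch_{\lambda,k}|$ directly (this only yields the crude bound $O(g_0)$, which is too weak once $\gbar_j\to 0$). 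Instead, since $\sum_k\chi_k\gbar_k^2<\infty$ by \refeq{chisum} and \refeq{chiglim}, the sequence $\lambdach_j$ converges to a limit $\lambdach_\infty$, and I would write $1-\lambdach_j=(1-\lambdach_\infty)+\sum_{k=j}^\infty\vch_{\lambda,k}$ and bound the tail via $\sum_{k\ge j}|\vch_{\lambda,k}|\prec\sum_{k\ge j}\chi_k\gbar_k^2\prec\chi_j\gbar_j$, using \refeq{vbds} and \refeq{chisum}.

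The main obstacle is the identification $\lambdach_\infty=1$, i.e.\ that the observable coupling returns to its initial value rather than settling at some $1+O(g_0)$; equivalently, that the partial sum $\sum_{k=0}^{j-1}\vch_{\lambda,k}$ is itself $O(\chi_j\gbar_j)$. The perturbative part of this is transparent: the increments $\delta_k[\nu w^{(1)}]$ telescope (using $\nu_k^+\approx\nu_{k+1}$), so the purely perturbative flow of $\lambda$ is exactly $(1+\nu_jw_j^{(1)})^{-1}\to 1$ as $\chi_j\gbar_j\to 0$. The delicate input is that the non-perturbative increments $v_{\lambda,k}=R^{\lambda}_{k+1}$, which through \refeq{vbds} are known only to be summable, sum to the value forcing $\lambdach_\infty=1$; this reflects that the field-strength renormalisation is carried entirely by the factor $1+z_0$ in \refeq{GZN}, and must be extracted from the structure of the renormalisation group construction of \cite{BS-rg-step} rather than from the size bound \refeq{vbds} alone. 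Granting this, the lemma follows, and the \emph{in particular} statement is the case $j=j_{ab}-1$, together with the fact that $\chi_{j_{ab}-1}\gbar_{j_{ab}-1}\prec\chi_{j_{ab}}\gbar_{j_{ab}}$ since consecutive scales give comparable values of $\chi_j\gbar_j$.
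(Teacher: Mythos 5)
Your reduction is correct as far as it goes: writing $\lambda^*_j=(1+\nu_jw_j^{(1)})^{-1}\lambdach_j$, bounding the prefactor by $1+O(\chi_j\gbar_j)$, and estimating $\lambdach_j-\lambdach_\infty$ by the tail $\sum_{k\ge j}|\vch_{\lambda,k}|\prec\chi_j\gbar_j$ via \refeq{vbds} and \refeq{chisum} is exactly the computation the paper performs in \eqref{e:lambda3}. You have also correctly diagnosed that the entire content of the lemma is the identity $\lambdach_\infty=1$, i.e.\ $\sum_{k=0}^{\infty}\vch_{\lambda,k}=0$, and that this cannot follow from the size bound \refeq{vbds} alone. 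But at precisely that point you write ``granting this, the lemma follows,'' which leaves the essential step unproved. The telescoping of the perturbative increments explains why the \emph{perturbative} flow returns to $1$, but says nothing about the non-perturbative increments $v_{\lambda,k}=R^{\lambda}_{k+1}$, which a priori could sum to any $O(g_0)$ quantity; a limit $\lambdach_\infty=1+O(g_0)$ would only give $|1-\lambda^*_j|\prec g_0$, which is not the claimed bound.

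The paper closes this gap with a Ward-identity-type argument that you would need to supply. One sets $(\lambda_{a,0},\lambda_{b,0})=(1,0)$ so that $j_{ab}=\infty$ (coalescence is vacuous) and $\lambda_{a,j}=\lambda^*_{a,j}$ for all $j\le N$. Lemma~\ref{lem:lamid} then expresses $\lambda^*_{a,N}$ as $D_{\phib}Z_{N;\sigma}^0(0,0;1)$ minus $W$ and $K$ correction terms, which Lemma~\ref{lem:WK} bounds by $O(\chi_N\gbar_N)$. The derivative $D_{\phib}Z_{N;\sigma}^0(0,0;1)$ is then computed \emph{exactly}: by the integral representation \eqref{e:SigmaaZN0a}--\eqref{e:sumEZN0a} it equals $m^2\sum_x\Ex_C\bigl(e^{-U_0(\Lambda)}\phi_x\phib_a\bigr)=m^2\hat\chi_N$, and $\hat\chi_N\to m^{-2}$ by \cite[Theorem~\ref{log-thm:suscept-diff}]{BBS-saw4-log}. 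Hence $\lambda^*_{a,N}\to 1$, which forces $\sum_{k=0}^{\infty}\vch_{\lambda,k}=0$. Without this external input---tying the observable flow back to the susceptibility computation---the structure of the renormalisation group map does not by itself determine $\lambdach_\infty$, and your argument is incomplete.
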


\begin{proof}
By Proposition~\ref{prop:qlamN}, the flow of $\lambda_a$ is
independent of the choice of $\lambda_{b,0}$, and vice versa.  We give
the proof for the case $x=a$, and the same argument applies to $x=b$.

We choose the initial conditions
$(\lambda_{a,0},\lambda_{b,0})=(1,0)$.  As discussed above
Proposition~\ref{prop:qlamN}, in this case we have $j_{ab}=N$.  By
Lemma~\ref{lem:lamid},
\begin{equation}
\lbeq{lamD}
    \lambda_{a,N}^* =
    D_{\phib} Z_{N;\sigma}^{0}(0,0; 1)
    - D_{\phib}W_{N;\sigma}^{0}(\Lambda; 0,0; 1)
    - D_{\phib}K_{N;\sigma}^{0}(\Lambda; 0,0; 1).
\end{equation}
By Lemma~\ref{lem:WK}, this gives
\begin{equation}
\lbeq{lamDbd}
    \lambda_{a,N}^* =
    D_{\phib} Z_{N;\sigma}^{0}(0,0; 1)
    +O(\chi_N g_N).
\end{equation}
The limit of the first term on the right-hand side of \refeq{lamDbd},
as $N\to\infty$, can be evaluated exactly, as follows.  Let $C$ be the
covariance defined in \refeq{Cdef}.  Recall from
\cite[\eqref{log-e:GamZ0}]{BBS-saw4-log} that, for any external field
$J: \Lambda \to \C$,
\begin{equation} \label{e:SigmaaZN0a}
  \Sigma_{a}(J,\bar J) = \Ex_C\left(e^{-V_{0}(\Lambda)+(J,\phib)+(\bar J, \phi)}\right)
  = e^{(J,C\bar J)} Z_N^0(CJ, C\bar J),
\end{equation}
where the superscript $0$ denotes projection onto the degree-$0$ part
of the form $Z_N$.  As opposed to \cite{BBS-saw4-log}, we include the
observable term $\sigma\phib_a$ in $V_0$ and $Z_N$ here, and we
emphasise this by writing $\Sigma_{a}$ instead of $\Sigma$; the
potential $V_0$ without observable terms is again denoted by $U_0$.
Each side of \refeq{SigmaaZN0a} has a decomposition as in
\refeq{Ncaldecomp}, and we equate the coefficients of $\sigma$ in the
components in $\Ncal^a$ to obtain
\begin{equation} \label{e:EZN0a}
  \Ex_C\left(e^{-U_{0}(\Lambda)+(J,\phib)+(\bar J, \phi)} \phib_a\right)
  = e^{(J,C\bar J)}
  Z_{N;\sigma}^0
  (CJ,C\bar J).
\end{equation}

Let $1$ be the constant test function $1_x=1$ for all $x \in \Lambda$.
Then $C1 = m^{-2}1$.
Differentiation of \refeq{EZN0a} at $(0,0)$ with respect to $J$, in
direction $1$, gives
\begin{align}\label{e:sumEZN0a}
  \sum_x \Ex_C\left(e^{-U_{0}(\Lambda)} \phi_x \phib_a\right)
  = D_{\bar \phi} Z_N^{0;a}(0,0; C1)
  = m^{-2} D_{\bar \phi} Z_{N;\sigma}^{0}(0,0; 1)
  .
\end{align}
By translation invariance of $\Ex_C$ and $U_0$, the left-hand side is
independent of $a\in\Lambda$.  In fact, it is equal to $\hat\chi_N$
defined in \cite[\eqref{log-e:chiNhatdef}]{BBS-saw4-log}, which
converges to $m^{-2}$ as $N\to\infty$, by
\cite[Theorem~\ref{log-thm:suscept-diff}]{BBS-saw4-log}.  Therefore,
\begin{equation} \label{e:ZN0ato0}
  \lim_{N\to\infty} D_{\phib} Z_{N;\sigma}^0(0,0; 1) = 1.
\end{equation}
We then apply Lemma~\ref{lem:WK} (with $(N-j_{ab})_+=0$), together
with $\chi_N g_N \to 0$, to conclude from \eqref{e:ZN0ato0} that the
right-hand side of \refeq{lamDbd} tends to $1$.  On the other hand, by
\refeq{lam}, together with \refeq{vbds} and the estimate $|\nu_j|
w_j^{(1)} \prec \chi_j \ggen_j$ used in the proof of
Proposition~\ref{prop:qlamN},
\begin{align} \label{e:lamlim}
    \lim_{N\to\infty } \lambda_{a,N}^*
    =
    1 + \sum_{k=0}^{\infty}\vch_{\lambda_x,k},
    \quad\quad
    \text{so} \quad \sum_{k=0}^{\infty}\vch_{\lambda_x,k}=0.
\end{align}
(Note that the convergence of the sum in \refeq{lamlim} is guaranteed
by \refeq{vbds} and \refeq{chisum}.)  Finally, by \refeq{lam} and
\refeq{chisum}, uniformly in $(m^2,g_0)$ we have
\begin{align}
\label{e:lambda3}
    \lambda_{j}^* - 1
&=
    - \nu_j w_j^{(1)} \lambda_{j} -
    \sum_{k= j}^\infty \vch_{\lambda,k}
    = O(\chi_j\gbar_j)
,
\end{align}
and the proof is complete.
\end{proof}

\begin{lemma}
\label{lem:lamqflow}
For $(m^2,g_0) \in [0,\delta)^2$ and $x=a,b$, the limit
\begin{align}
    q_{x,\infty}(m^2,g_0)  = \lim_{j \to \infty}q_{x,j}(m^2,g_0),
\label{e:qlims}
\end{align}
exists, is continuous, and, as $|a-b|\to\infty$,
\begin{equation}
\lbeq{qLap}
  q_{x,\infty}(0, g_0)
  =
  (-\Delta_{\Z^4})^{-1}_{ab}
  \left( 1 + O \left( \frac{1}{\log |a-b|} \right) \right)
  .
\end{equation}
\end{lemma}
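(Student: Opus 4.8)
The plan is to start from the closed-form expression \refeq{q} for $q_{x,j}$ supplied by Proposition~\ref{prop:qlamN}(i) and to pass to the limit $j\to\infty$ term by term. Pulling the (scale-independent) prefactor out of the $C$-contribution and re-indexing $k=i+1$, the sum \refeq{q} reads
\begin{equation*}
  q_{x,j} = \lambda_{a,j_{ab}-1}\lambda_{b,j_{ab}-1}\sum_{k=j_{ab}}^{j} C_{k;a,b}
  + \sum_{i=j_{ab}-1}^{j-1} v_{q_x,i}.
\end{equation*}
I would treat the first sum using the finite-range decomposition and the second using the geometric bound \refeq{vbds}.

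For existence, recall from the discussion below Definition~\ref{def:jab} that the finite-range property gives $C_{k;a,b}=0$ for $k\le j_{ab}$; hence by the decomposition \eqref{e:Zddecomp} the first sum telescopes, as $j\to\infty$, to $\sum_{k=1}^\infty C_{k;a,b}=(-\Delta_{\Z^4}+m^2)^{-1}_{ab}$. For the remainder, \refeq{vbds} yields $|v_{q_x,i}|\prec |a-b|^{-2}\chi_i\ggen_i\,4^{-(i-j_{ab})}\1_{i\ge j_{ab}}$, and since $\chi_i\ggen_i$ is bounded (by \refeq{chiglim}) while $4^{-(i-j_{ab})}$ is summable, this sum converges absolutely. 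Therefore $q_{x,\infty}$ exists and equals $\lambda_{a,j_{ab}-1}\lambda_{b,j_{ab}-1}(-\Delta_{\Z^4}+m^2)^{-1}_{ab}+\sum_{i=j_{ab}-1}^\infty v_{q_x,i}$. Continuity in $(m^2,g_0)$ follows by upgrading the scalewise continuity of $q_{x,j}$ (Proposition~\ref{prop:qlamN}(ii)) to the limit: the tail $q_{x,\infty}-q_{x,j}$ is controlled by $\sum_{k>j}|C_{k;a,b}|$ and by the geometric $v$-tail, both of which vanish uniformly on $[0,\delta)^2$ using the standard decay estimates on $C_k$ from \cite{BBS-rg-pt}, so the convergence is locally uniform and the limit inherits continuity; the factor $(-\Delta_{\Z^4}+m^2)^{-1}_{ab}$ is itself continuous in $m^2$ down to $m^2=0$ in $d=4$.

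For the asymptotics \refeq{qLap} I set $m^2=0$, where $\chi_j\equiv 1$. The leading term is $\lambda_{a,j_{ab}-1}(0,g_0)\,\lambda_{b,j_{ab}-1}(0,g_0)\,(-\Delta_{\Z^4})^{-1}_{ab}$. Here Lemma~\ref{lem:lamlim} gives $|1-\lambda_{x,j_{ab}-1}(0,g_0)|\prec\gbar_{j_{ab}}$, and since $j_{ab}\asymp\log|a-b|$ by Definition~\ref{def:jab} together with the bound $\gbar_{j_{ab}}\prec 1/\log|a-b|$ implied by \refeq{chiglim}, the prefactor is $1+O(1/\log|a-b|)$. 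The remainder sum is bounded, via the $m^2=0$ case of \refeq{vbds} and the geometric factor, by $O(|a-b|^{-2}\gbar_{j_{ab}})=O(|a-b|^{-2}/\log|a-b|)$. Since the lattice Green function obeys $(-\Delta_{\Z^4})^{-1}_{ab}\asymp|a-b|^{-2}$ in $d=4$, this remainder is absorbed into a relative error of order $1/\log|a-b|$, yielding \refeq{qLap}.

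The step I expect to be the main obstacle is the continuity at the massless endpoint $m^2=0$: one needs the convergence $q_{x,j}\to q_{x,\infty}$ to be uniform in a neighbourhood of $m^2=0$, which requires decay bounds on $C_{k;a,b}$ holding uniformly down to $m^2=0$ (so that the Green-function tail is controlled even though $\chi_j$ no longer forces decay there), together with continuity of the massive lattice Green function as $m^2\downarrow 0$. Once \refeq{vbds} and Lemma~\ref{lem:lamlim} are in hand, the remaining estimates are routine geometric-series bounds.
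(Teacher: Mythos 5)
Your proposal follows essentially the same route as the paper's proof: pass to the limit in the closed form \refeq{q}, use the finite-range property to restore the low scales and recover the full decomposition $(-\Delta_{\Z^4}+m^2)^{-1}_{ab}$, control the $v_{q_x,i}$ tail by the geometric bound \refeq{vbds}, obtain continuity from uniform convergence of continuous terms, and convert the additive error $\prec \chi_{j_{ab}}\gbar_{j_{ab}}|a-b|^{-2}$ (together with Lemma~\ref{lem:lamlim} for the $\lambda$ prefactor) into the relative error $O(1/\log|a-b|)$ via \refeq{chiglim}. The argument is correct and matches the paper's in all essentials.
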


\begin{proof}
We again drop the labels $x=a,b$ from $\lambda,q$ when their role is
insignificant.

By \refeq{q},
\begin{equation}
    \label{e:qN}
    q_{j}
    =
    \sum_{i=j_{\pp\qq}-1}^{j-1}
    \left(
    \lambda_{a,j_{ab}-1}\lambda_{b,j_{ab}-1}
    \,C_{i+1;\pp,\qq}
    +
    v_{q,i}
    \right).
\end{equation}
Since $C_{i+1;a,b}=0$ for $i<j_{\pp\qq}$, we can restore the scales
$i< j_{\pp\qq}$ to the sum in the first term on the right-hand side.
In the limit $j \to \infty$, we obtain the complete finite-range
decomposition for the inverse Laplacian on $\Z^4$ as in
\eqref{e:Zddecomp},
\begin{equation}
    \sum_{i=j_{\pp\qq}-1}^{\infty}
    C_{i+1;\pp,\qq}
    =
    \sum_{i=0}^{\infty}
    C_{i+1;\pp,\qq}
    =
    (-\Delta_{\Z^4}+m^2)^{-1}_{ab}.
\end{equation}
The dependence of $\lambda_{x,j_{ab}-1}$ is continuous in
$[0,\delta)^2$ by Proposition~\ref{prop:qlamN}, and
$(-\Delta_{\Z^4}+m^2)^{-1}_{ab}$ is continuous in $m^2 \in
[0,\delta)$.
By Proposition~\ref{prop:qlamN}, in the limit $j \to \infty$ the sum
of $v_{q,i}$ on the right-hand side of \refeq{qN} is a uniformly
convergent sum of terms that are continuous.  Therefore the sum
$q_\infty(m^2,g_0)$ is also continuous, and
\begin{equation}
    \label{e:q-infty}
    q_\infty
    =
    \lambda_{a,j_{ab}-1}\lambda_{b,j_{ab}-1}
    (-\Delta_{\Z^4}+m^2)^{-1}_{ab}
    +
    \sum_{i=j_{\pp\qq}}^{\infty}
    v_{q,i}
    .
\end{equation}

By \refeq{vq},
\begin{equation}
    \sum_{i=j_{\pp\qq}}^{\infty} | v_{q,i}|
    \prec
    |a-b|^{-2}
    \sum_{i=j_{\pp\qq}}^{\infty}
    4^{-(i-j_{ab})}
    \chi_i
    \ggen_{i}
    \prec
    |a-b|^{-2}
    \chi_{j_{ab}} \gbar_{j_{ab}}
    .
\end{equation}
Therefore,
\begin{align}
\lbeq{qdiff}
    \big|
    q_\infty -
    \lambda_{a,j_{ab}-1}\lambda_{b,j_{ab}-1}
    (-\Delta_{\Z^4}+m^2)^{-1}_{ab}
    \big|
    &\prec
    \chi_{j_{\pp \qq}} \gbar_{j_{\pp \qq}}
    |\pp - \qq|^{-2}
    .
\end{align}
By Lemma~\ref{lem:lamlim},
\begin{equation}
\lbeq{lamlim2}
    |
    1
    -
    \lambda_{a,j_{ab}-1}\lambda_{b,j_{ab}-1}
    |
    \prec \chi_{j_{ab}}\gbar_{j_{ab}}.
\end{equation}
With \refeq{qdiff} and $|(-\Delta_{\Z^4}+m^2)^{-1}_{ab}| \prec
|a-b|^{-2}$, this gives
\begin{align}
    \big|
    q_\infty -
    (-\Delta_{\Z^4}+m^2)^{-1}_{ab}
    \big|
    &\prec
    \chi_{j_{\pp \qq}} \gbar_{j_{\pp \qq}}
    |\pp - \qq|^{-2}
    ,
\end{align}
uniformly in $(m^2,g_0)$.  By \refeq{chiglim},
$\chi_{j_{\pp\qq}}\gbar_{j_{\pp \qq}} \prec j_{ab}^{-1} \prec
(\log|a-b|)^{-1}$.  In particular, the limit $q_\infty (0,g_0)$ obeys
\refeq{qLap}.
\end{proof}

\subsection{Proof of main result}
\label{sec:alt-crit}

We now prove Theorem~\ref{thm:wsaw4}.  In addition to the study of
$q_j$, which provides the leading contribution, this requires the
estimate \refeq{Kg1} on $\pi_{ab}K_N$.

\begin{proof}[Proof of Theorem~\ref{thm:wsaw4}] For small
$g,\varepsilon>0$, set $\nu=\nu_c(g)+\varepsilon$, and let
$(m^2,g_0,\nu_0,z_0) = (\tilde m^2,\tilde g_0, \tilde \nu_0, \tilde
z_0)$ be the functions of $(g,\varepsilon)$ given by
Proposition~\ref{prop:changevariables}.  Since $z_0 = \tilde
z_0(g,\varepsilon) \to \tilde z_0(g,0)$ as $\varepsilon \downarrow 0$,
\refeq{Gil-bis} gives
\begin{equation}
\lbeq{Gil}
    G_{g,\nu_c}(a,b)
    =
    \big(1+\tilde z_0(g,0) \big)
    \lim_{\varepsilon \downarrow 0}
    \lim_{N\to\infty}
    Z_{N;\sigma\bar\sigma}^0
    (0)
    .
\end{equation}
The arguments $0$ on the right-hand side mean that the fields $\phi
,\psi$ are to be set to zero in $I_{N},K_{N}$.  Thus
$I_N^0(\Lambda)=1$, and for $K_N^0(\Lambda)$ only dependence on
$\sigma,\sigmab$ remains.  From \eqref{e:ZNcirc} we obtain
\begin{equation}
\label{e:ZNIK}
    Z_{N}^0(0)
    =
    e^{q_N\sigma\sigmab}
     (I_{N}^0(\Lambda,0) + K_{N}^0(\Lambda,0))
    =
    e^{q_N\sigma\sigmab}
    (1+
     K_{N}^0(\Lambda,0)  )
    ,
\end{equation}
with $q_N = \frac 12 (q_{a,N} + q_{b,N})$ as in \refeq{delqdef}.
Equating the coefficients of $\sigma\bar\sigma$ on both sides gives
\begin{align}
    Z_{N;\sigma\bar\sigma}^0(0)
    =
    q_{N}\big( 1 + \pi_\varnothing K_N^0(\Lambda,0) \big)
    +
    K_{N;\sigma\bar\sigma}^0(\Lambda,0)
    .
\end{align}
Since $\varepsilon > 0$ by assumption, it follows that $m^2>0$,
by Proposition~\ref{prop:changevariables}.
Therefore, for $N$ sufficiently large, the bounds \refeq{Kgnull} and \refeq{Kg1} hold.
In particular, by \eqref{e:chiglim},
\begin{equation}
\label{e:KNsslim}
    \lim_{N \to \infty}
    \pi_\varnothing K_N^0(\Lambda,0)=0,
    \quad\quad
    \lim_{N\to\infty}
    K_{N;\sigma\bar\sigma}^0(\Lambda,0)
    = 0
    ,
\end{equation}
and therefore
\begin{align}
\label{e:final}
    \lim_{N\to\infty}
    Z_{N;\sigma\bar\sigma}^0(0)
    &=
    \lim_{N\to\infty}q_N
    =
    q_\infty
    = \frac 12 (q_{a,\infty} + q_{b,\infty})
    .
\end{align}
With \eqref{e:Gil} and Lemma~\ref{lem:lamqflow}, this gives
\begin{align}
\label{e:final1}
    G_{g,\nu_c}(a,b)
    &=
    (1+\tilde z_0(g,0))
    \lim_{\varepsilon \downarrow 0}
    q_\infty
\nnb
    &=
    \big(1+\tilde z_0(g,0) \big)
    (-\Delta_{\Z^4})^{-1}_{ab}
    \left( 1 + O\left( \frac{1}{\log|a-b|} \right) \right).
\end{align}
It is a standard fact that $(-\Delta_{\Z^4})^{-1}_{ab} = (2\pi)^{-2}
|a-b|^{-2}(1+O(|a-b|^{-2}))$ (see, e.g., \cite{Lawl91}---the different
constant $(2\pi)^{-2}$ takes into account our definition of the
Laplacian).  Since $\tilde z_0(g,0) = O(g)$, the proof is complete.
(Although our analysis allows $q_j$ to become complex, the left-hand side
of \refeq{final1} is real by definition, so the right-hand side is as well.)
\end{proof}

\begin{rk}
The proof of \refeq{final1} used the fact, proved in
Lemma~\ref{lem:lamlim}, that $\lambda_{x,j}^* \to 1$ as $j \to
\infty$.  The fact that this limit is exactly equal to $1$ (without
$O(g)$ error, as one might expect) is intimately related to the
interpretation of $\lim_{j\to\infty} (1+z_0)^{1/2} \lambda_{x,j}$ as
\emph{field strength renormalisation}.  Without using
$\lambda_{x,\infty}^*=1$, the above proof would show that the
two-point function is asymptotic to
$(1+z_0)\lambda^*_{a,\infty}\lambda^*_{b,\infty}(-\Delta)^{-1}_{ab}$.
Thus, $\lambda_{x,\infty}^*=1$ means that the field strength
renormalisation is given by $(1+z_0)^{1/2}$ only.  This was
anticipated already in
\cite[Section~\ref{log-sec:chvar}]{BBS-saw4-log}, when we split the
original potential $V_{g,\nu,1}$ into an effective free field with
field strength $(1+z_0)^{1/2}$ and mass $m$, and a perturbation.
\end{rk}

\begin{rk}
Note that
\begin{equation}
\lbeq{Gidentity}
    G_{g,\nu_c}(a,b)=(1+\tilde z_0(g,0))q_\infty
\end{equation}
is an \emph{equality}, and not merely an asymptotic
formula.  As such, it contains all information about the two-point
function, including not just the leading asymptotic behaviour but
also all higher-order corrections.
\end{rk}

\begin{rk}
Equations~\refeq{ellsigdef}--\refeq{Kg1} provide corrections
to \cite[(109)--(111)]{BS11}, which contain erroneous powers of $\gbar_N$
in the upper bounds.  In \cite[(109), (111)]{BS11}, the $\gbar_N^3$ in
the upper bound  should be $\gbar_N$, and in \cite[(110))]{BS11} a factor $\gbar_N$
is missing on the right-hand side (it is present in \refeq{ellsigdef}).
The above proof shows that the correct powers here
remain sufficient to prove \refeq{final1}.
\end{rk}

\section*{Acknowledgements}

This work was supported in part by NSERC of Canada.
This material is also based upon work supported by the National
Science Foundation under agreement No.\ DMS-1128155.  RB gratefully
acknowledges the support of the University of British Columbia, where
he was a PhD student while much of his work was done.  Part of this
work was done away from the authors' home institutions, and we
gratefully acknowledge the support and hospitality of the IAM at the
University of Bonn and the Department of Mathematics and Statistics at
McGill University (RB), the Institute for Advanced Study at Princeton
and Eurandom (DB), and the Institut Henri Poincar\'e and the
Mathematical Institute of Leiden University (GS).  We thank Alexandre
Tomberg for many useful discussions, and an anonymous referee for
helpful comments.

\bibliography{../../bibdef/bib}

\begin{thebibliography}{10}

\bibitem{ACG13}
A.~Abdesselam, A.~Chandra, and G.~Guadagni.
\newblock Rigorous quantum field theory functional integrals over the $p$-adics
  {I}: {Anomalous} dimensions.
\newblock Preprint, (2013).

\bibitem{Aize82}
M.~Aizenman.
\newblock Geometric analysis of $\varphi^4$ fields and {I}sing models, {Parts}
  {I} and {II}.
\newblock {\em Commun. Math. Phys.}, {\bf 86}:1--48, (1982).

\bibitem{ACF83}
C.~Arag\~{a}o~de Carvalho, S.~Caracciolo, and J.~Fr\"{o}hlich.
\newblock Polymers and $g|\phi|^4$ theory in four dimensions.
\newblock {\em Nucl. Phys. B}, {\bf 215} [FS7]:209--248, (1983).

\bibitem{Baue13a}
R.~Bauerschmidt.
\newblock A simple method for finite range decomposition of quadratic forms and
  {Gaussian} fields.
\newblock {\em Probab. Theory Related Fields}, {\bf 157}:817--845, (2013).

\bibitem{BBS-saw4-log}
R.~Bauerschmidt, D.C. Brydges, and G.~Slade.
\newblock Logarithmic correction for the susceptibility of the 4-dimensional
  weakly self-avoiding walk: a renormalisation group analysis.
\newblock To appear in \emph{Commun.\ Math.\ Phys.}

\bibitem{BBS-rg-pt}
R.~Bauerschmidt, D.C. Brydges, and G.~Slade.
\newblock A renormalisation group method. {III}. {Perturbative} analysis.
\newblock To appear in \emph{J.\ Stat.\ Phys.}

\bibitem{BBS-phi4-log}
R.~Bauerschmidt, D.C. Brydges, and G.~Slade.
\newblock Scaling limits and critical behaviour of the $4$-dimensional
  $n$-component $|\varphi|^4$ spin model.
\newblock {\em J. Stat. Phys}, {\bf 157}:692--742, (2014).

\bibitem{BDGS12}
R.~Bauerschmidt, H.~Duminil-Copin, J.~Goodman, and G.~Slade.
\newblock Lectures on self-avoiding walks.
\newblock In D.~Ellwood, C.~Newman, V.~Sidoravicius, and W.~Werner, editors,
  {\em Probability and Statistical Physics in Two and More Dimensions}, pages
  395--467. Clay Mathematics Proceedings, vol. 15, Amer. Math. Soc.,
  Providence, RI, (2012).

\bibitem{BFF84}
A.~Bovier, G.~Felder, and J.~Fr\"{o}hlich.
\newblock On the critical properties of the {Edwards} and the self-avoiding
  walk model of polymer chains.
\newblock {\em Nucl. Phys. B}, {\bf 230} [FS10]:119--147, (1984).

\bibitem{BGZ73}
E.~Br\'ezin, J.C. Le~Guillou, and J.~Zinn-Justin.
\newblock Approach to scaling in renormalized perturbation theory.
\newblock {\em Phys.\ Rev.\ D}, {\bf 8}:2418--2430, (1973).

\bibitem{BEI92}
D.~Brydges, S.N. Evans, and J.Z. Imbrie.
\newblock Self-avoiding walk on a hierarchical lattice in four dimensions.
\newblock {\em Ann. Probab.}, {\bf 20}:82--124, (1992).

\bibitem{BS11}
D.~Brydges and G.~Slade.
\newblock Renormalisation group analysis of weakly self-avoiding walk in
  dimensions four and higher.
\newblock In R.~Bhatia et~al, editor, {\em Proceedings of the International
  Congress of Mathematicians, Hyderabad 2010}, pages 2232--2257, Singapore,
  (2011). World Scientific.

\bibitem{BDS12}
D.C. Brydges, A.~Dahlqvist, and G.~Slade.
\newblock The strong interaction limit of continuous-time weakly self-avoiding
  walk.
\newblock In J.-D. Deuschel, B.~Gentz, W.~K\"onig, M.~von Renesse,
  M.~Scheutzow, and U.~Schmock, editors, {\em Probability in Complex Physical
  Systems: In Honour of Erwin Bolthausen and J\"urgen G\"artner}, Springer
  Proceedings in Mathematics, Volume 11, pages 275--287, Berlin, (2012).
  Springer.

\bibitem{BGM04}
D.C. Brydges, G.~Guadagni, and P.K. Mitter.
\newblock Finite range decomposition of {Gaussian} processes.
\newblock {\em J. Stat. Phys.}, {\bf 115}:415--449, (2004).

\bibitem{BI03d}
D.C. Brydges and J.Z. Imbrie.
\newblock {G}reen's function for a hierarchical self-avoiding walk in four
  dimensions.
\newblock {\em Commun. Math. Phys.}, {\bf 239}:549--584, (2003).

\bibitem{BIS09}
D.C. Brydges, J.Z. Imbrie, and G.~Slade.
\newblock Functional integral representations for self-avoiding walk.
\newblock {\em Probab.\ Surveys}, {\bf 6}:34--61, (2009).

\bibitem{BS-rg-loc}
D.C. Brydges and G.~Slade.
\newblock A renormalisation group method. {II}. {Approximation by local
  polynomials}.
\newblock To appear in \emph{J.\ Stat.\ Phys.}

\bibitem{BS-rg-IE}
D.C. Brydges and G.~Slade.
\newblock A renormalisation group method. {IV}. {Stability} analysis.
\newblock To appear in \emph{J.\ Stat.\ Phys.}

\bibitem{BS-rg-step}
D.C. Brydges and G.~Slade.
\newblock A renormalisation group method. {V}. {A} single renormalisation group
  step.
\newblock To appear in \emph{J.\ Stat.\ Phys.}

\bibitem{BS85}
D.C. Brydges and T.~Spencer.
\newblock Self-avoiding walk in 5 or more dimensions.
\newblock {\em Commun. Math. Phys.}, {\bf 97}:125--148, (1985).

\bibitem{CS14}
L.-C. Chen and A.~Sakai.
\newblock Critical two-point functions for long-range statistical-mechanical
  models in high dimensions.
\newblock To appear in \emph{Ann.\ Probab.}

\bibitem{Clis10}
N.~Clisby.
\newblock Accurate estimate of the critical exponent $\nu$ for self-avoiding
  walks via a fast implementation of the pivot algorithm.
\newblock {\em Phys. Rev. Lett.}, {\bf 104}:055702, (2010).

\bibitem{DH92}
J.~Dimock and T.R. Hurd.
\newblock A renormalization group analysis of correlation functions for the
  dipole gas.
\newblock {\em J. Stat. Phys.}, {\bf 66}:1277--1318, (1992).

\bibitem{FMRS87}
J.~Feldman, J.~Magnen, V.~Rivasseau, and R.~S\'en\'eor.
\newblock Construction and {Borel} summability of infrared {$\Phi^4_4$} by a
  phase space expansion.
\newblock {\em Commun. Math. Phys.}, {\bf 109}:437--480, (1987).

\bibitem{Froh82}
J.~Fr\"{o}hlich.
\newblock On the triviality of $\varphi_d^4$ theories and the approach to the
  critical point in $d \geq 4$ dimensions.
\newblock {\em Nucl. Phys.}, {\bf B200} [FS4]:281--296, (1982).

\bibitem{GK85}
K.~Gaw\c{e}dzki and A.~Kupiainen.
\newblock Massless lattice $\varphi^4_4$ theory: Rigorous control of a
  renormalizable asymptotically free model.
\newblock {\em Commun. Math. Phys.}, {\bf 99}:199--252, (1985).

\bibitem{GK86}
K.~Gaw\c{e}dzki and A.~Kupiainen.
\newblock Asymptotic freedom beyond perturbation theory.
\newblock In K.~Osterwalder and R.~Stora, editors, {\em Critical Phenomena,
  Random Systems, Gauge Theories}, Amsterdam, (1986). North-Holland.
\newblock Les Houches 1984.

\bibitem{Hara87}
T.~Hara.
\newblock A rigorous control of logarithmic corrections in four dimensional
  $\varphi^4$ spin systems. {I}. {Trajectory} of effective {Hamiltonians}.
\newblock {\em J. Stat. Phys.}, {\bf 47}:57--98, (1987).

\bibitem{Hara08}
T.~Hara.
\newblock Decay of correlations in nearest-neighbor self-avoiding walk,
  percolation, lattice trees and animals.
\newblock {\em Ann. Probab.}, {\bf 36}:530--593, (2008).

\bibitem{HHS03}
T.~Hara, R.~van~der Hofstad, and G.~Slade.
\newblock Critical two-point functions and the lace expansion for spread-out
  high-dimensional percolation and related models.
\newblock {\em Ann. Probab.}, {\bf 31}:349--408, (2003).

\bibitem{HS92a}
T.~Hara and G.~Slade.
\newblock Self-avoiding walk in five or more dimensions. {I.} {The} critical
  behaviour.
\newblock {\em Commun.\ Math.\ Phys.}, {\bf 147}:101--136, (1992).

\bibitem{HT87}
T.~Hara and H.~Tasaki.
\newblock A rigorous control of logarithmic corrections in four dimensional
  $\varphi^4$ spin systems. {II}. {Critical} behaviour of susceptibility and
  correlation length.
\newblock {\em J. Stat. Phys.}, {\bf 47}:99--121, (1987).

\bibitem{Holl09}
F.~den Hollander.
\newblock {\em Random Polymers}.
\newblock Springer, Berlin, (2009).
\newblock Lecture Notes in Mathematics Vol. 1974. Ecole d'Et\'{e} de
  Probabilit\'{e}s de Saint--Flour XXXVII--2007.

\bibitem{IM94}
D.~Iagolnitzer and J.~Magnen.
\newblock Polymers in a weak random potential in dimension four: rigorous
  renormalization group analysis.
\newblock {\em Commun. Math. Phys.}, {\bf 162}:85--121, (1994).

\bibitem{LK69}
A.I. Larkin and D.E. Khmel'Nitski\u{i}.
\newblock Phase transition in uniaxial ferroelectrics.
\newblock {\em Soviet Physics JETP}, {\bf 29}:1123--1128, (1969).
\newblock {English} translation of {\it Zh.\ Eksp.\ Teor.\ Fiz.} {\bf 56},
  2087--2098, (1969).

\bibitem{Lawl91}
G.F. Lawler.
\newblock {\em Intersections of Random Walks}.
\newblock Birkh\"{a}user, Boston, (1991).

\bibitem{LSW04}
G.F. Lawler, O.~Schramm, and W.~Werner.
\newblock On the scaling limit of planar self-avoiding walk.
\newblock {\em Proc. Symposia Pure Math.}, {\bf 72}:339--364, (2004).

\bibitem{MS93}
N.~Madras and G.~Slade.
\newblock {\em The Self-Avoiding Walk}.
\newblock Birkh{\"a}user, Boston, (1993).

\bibitem{McKa80}
A.J. McKane.
\newblock Reformulation of $n \to 0$ models using anticommuting scalar fields.
\newblock {\em Phys. Lett. A}, {\bf 76}:22--24, (1980).

\bibitem{Nien82}
B.~Nienhuis.
\newblock Exact critical exponents of the ${O}(n)$ models in two dimensions.
\newblock {\em Phys. Rev. Lett.}, {\bf 49}:1062--1065, (1982).

\bibitem{PS80}
G.~Parisi and N.~Sourlas.
\newblock Self-avoiding walk and supersymmetry.
\newblock {\em J. Phys. Lett.}, {\bf 41}:L403--L406, (1980).

\bibitem{SBB11}
R.D. Schram, G.T. Barkema, and R.H. Bisseling.
\newblock Exact enumeration of self-avoiding walks.
\newblock {\em J. Stat. Mech.}, P06019, (2011).

\bibitem{Slad06}
G.~Slade.
\newblock {\em The Lace Expansion and its Applications.}
\newblock Springer, Berlin, (2006).
\newblock Lecture Notes in Mathematics Vol. 1879. Ecole d'Et\'{e} de
  Probabilit\'{e}s de Saint--Flour XXXIV--2004.

\bibitem{ST-phi4}
G.~Slade and A.~Tomberg.
\newblock Critical correlation functions for the $4$-dimensional weakly
  self-avoiding walk and $n$-component $|\varphi|^4$ model.
\newblock Preprint, (2014).

\bibitem{WR73}
F.J. Wegner and E.K. Riedel.
\newblock Logarithmic corrections to the molecular-field behavior of critical
  and tricritical systems.
\newblock {\em Phys. Rev. B}, {\bf 7}:248--256, (1973).

\end{thebibliography}
\bibliographystyle{plain}

\end{document}